  \providecommand\BibTeX{{%
    \normalfont B\kern-0.5em{\scshape i\kern-0.25em b}\kern-0.8em\TeX}}}
\providecommand{\DontPrintSemicolon}{\dontprintsemicolon}
\definecolor{input}{HTML}{303060}
\definecolor{output}{HTML}{804000}
\definecolor{string}{HTML}{A02020}
\definecolor{parent}{HTML}{A020A0}
\definecolor{function}{HTML}{205080}
\definecolor{constructor}{HTML}{205080}
\definecolor{method}{HTML}{205080} 
\definecolor{keyword}{HTML}{008000}
\definecolor{error}{HTML}{B01010}
\definecolor{comment}{HTML}{60A060}
\newcommand{\noopsort}[1]{}
\DeclareMathOperator{\val}{val}
\DeclareMathOperator{\lcm}{lcm}
\newcommand{\N}{\mathbb N}
\newcommand{\Z}{\mathbb Z}
\newcommand{\NN}{\mathbb N}
\newcommand{\ZZ}{\mathbb Z}
\newcommand{\Q}{\mathbb Q}
\newcommand{\mon}{\mathrm{mon}}
\newcommand{\X}{\mathbf{X}}
\renewcommand{\i}{\mathbf{i}}
\renewcommand{\j}{\mathbf{j}}
\renewcommand{\r}{\mathbf{r}}
\newcommand{\ifnonempty}[3]{%
  \def\tempa{}%
  \def\tempb{#1}%
  \ifx\tempa\tempb 
  #3 
  \else            
  #2
  \fi}
\newcommand{\Kz}{K^\circ}
\newcommand{\KzX}[1][]{K\{ \X \ifnonempty{#1}{; #1}{} \}^\circ}
\newcommand{\KX}[1][]{K \{ \X \ifnonempty{#1}{; #1}{} \}}
\newcommand{\kX}{k[\X]}
\newcommand{\TzX}[1][]{T\{ \X \ifnonempty{#1}{; #1}{} \}^\circ}
\newcommand{\TT}{\mathbb T}
\newcommand{\TTzX}[1][]{\TT\{ \X \ifnonempty{#1}{; #1}{} \}^\circ}
\newcommand{\Kb}{k}
\newcommand{\KbX}{\Kb[\X]}
\newcommand{\all}{\textrm{all}}
\newcommand{\LT}{LT}
\newcommand{\LM}{LM}
\newcommand{\GB}{\textit{\textsf{GBasis}}}
\newcommand{\regularreduce}{\textsf{regular\_reduce }}
\newcommand{\Jpair}{\textup{\textsf{J-pair}}}
\newcommand{\sage}{\textsc{SageMath}\xspace}
\definecolor{purple}{rgb}{0.6,0,0.6}
\definecolor{answer}{rgb}{0,0.5,0.2}
\newtheorem{theo}{Theorem}[section]
\newtheorem{lem}[theo]{Lemma}
\newtheorem{prop}[theo]{Proposition}
\theoremstyle{definition}
\newtheorem{deftn}[theo]{Definition}
\begin{document}

\title[Signature-based Algorithms for Gröbner Bases over Tate Algebras]{Signature-based Algorithms\\for Gröbner Bases over Tate Algebras}

\author{Xavier Caruso}
\affiliation{Université de Bordeaux,
  \institution{CNRS, INRIA}
  \city{Bordeaux, France}}
\email{xavier.caruso@normalesup.org}

\author{
  Tristan Vaccon}
\affiliation{Universit\'e de Limoges;
  \institution{CNRS, XLIM UMR 7252}
  \city{Limoges, France}  
  \postcode{87060}  
}
\email{tristan.vaccon@unilim.fr}

\author{
  Thibaut Verron}
\affiliation{
  \institution{Johannes Kepler University\\Institute for Algebra }
  \city{Linz, Austria}  
}
\email{thibaut.verron@jku.at}

\thanks{%
The first author is supported by  the 
ANR grant CLap--CLap,
referenced ANR-18-CE40-0026-01.
The third author is supported by  the 
FWF grant P31571-N32.}

\begin{abstract}
Introduced by Tate in~\cite{Tate}, Tate algebras play 
a major role in the 
context of analytic geometry over the $p$-adics,
where they act as a counterpart to the use
of polynomial algebras in classical algebraic geometry.
In~\cite{CVV} the formalism of Gröbner bases over Tate algebras
has been introduced and effectively implemented.
One of the bottlenecks in the algorithms
was the time spent on reduction, which are
significantly costlier than over polynomials.
In the present article, we introduce two signature-based
Gröbner bases algorithms for Tate algebras, in
order to avoid many reductions.
They have been implemented in \sage. We discuss
their superiority based on numerical evidence.
\end{abstract}

\begin{CCSXML}
  <ccs2012>
  <concept>
  <concept_id>10010147.10010148.10010149.10010150</concept_id>
  <concept_desc>Computing methodologies~Algebraic algorithms</concept_desc>
  <concept_significance>500</concept_significance>
  </concept>
  </ccs2012>
\end{CCSXML}

\ccsdesc[500]{Computing methodologies~Algebraic algorithms}


\vspace{-1.5mm}

\keywords{Algorithms, Power series, Tate algebra, Gröbner bases, F5 
  algorithm, $p$-adic precision}

\maketitle

\hfill
  \emph{This article is dedicated to the memory of John Tate.}

\section{Introduction}

For several decades, many computational questions arising from
geometry and arithmetics have received much attention, leading 
to the development of more and more efficient algorithms and 
software.
A typical example is the development of the theory of Gröbner bases, 
which provides nowadays quite efficient tools for manipulating ideals 
in polynomial algebras and, eventually, algebraic varieties and 
schemes~\cite{Magma, Macaulay, Sage, Singular}.
At the intersection of geometry and number theory, one finds $p$-adic
geometry and, more precisely, the notion of $p$-adic analytic varieties
first defined by Tate in~\cite{Tate} (see also \cite{FP04}), which
plays an
important role in many modern theories and achievements (\emph{e.g.} 
$p$-adic cohomologies~\cite{LS07}, $p$-adic modular forms~\cite{Go88}).

The main algebraic objects upon which Tate's geometry is built are
Tate algebras and their ideals.
In an earlier paper~\cite{CVV}, the authors started to study 
computational aspects related to Tate algebras, introduced
Gröbner bases in this context and designed two algorithms
(adapted from Buchberger's algorithm and the F4 algorithm, respectively)
for computing them.

In the classical setting, the main complexity bottleneck in Gröbner bases 
computations is the time spent reducing
elements modulo the basis.
The most costly reductions are typically reductions to~$0$, because they require
successively eliminating
all terms from the polynomial; yet their output has little value for the rest of the
algorithm. Fortunately, it turns out that
many such reductions can be predicted in advance (for example those
coming from the obvious equality $fg - gf = 0$)
by keeping track of some information on the module representation
of elements of an ideal, called their \emph{signature}.
This idea was first presented in Algorithm F5~\cite{F5} and
led to the development of many algorithms showing different ways to define
signatures, to use them or to compute them.
The interested reader can look at~\cite{EF17} for an extensive survey.

The Tate setting is not an exception to the wisdom that reductions are expensive. The
situation is actually even worse since reductions to $0$ are theorically the result 
of an \emph{infinite} sequence of reduction steps \emph{converging to~$0$}. 
In practice, the process actually stops because we are working at finite precision; however,
the higher the precision is, the more expensive the reductions to~$0$ are, for no benefit.
This observation motivates investigating the possibility of adding signatures to 
Gröbner bases algorithms for Tate series.

\smallskip

\noindent
\textit{Our contribution.}
In this paper, we present two signature-based algorithms for the computation
of Gröbner bases over Tate algebras. They differ in that they use different
orderings on the signatures.

Our first variant, called the PoTe (position over term) algorithm, is 
directly adapted from the G2V algorithm~\cite{GGV10}.
It adopts an incremental point of view
and uses the so-called cover criterion~\cite{GVW16} to detect reductions to~$0$.
A key difficulty in the Tate setting is that the usual way to handle signatures 
assumes the constant term $1$ to be the smallest one. However, this assumption
fails in the Tate setting. We solve this issue by importing ideas 
from the paper~\cite{LWXZ}, in which the case of local algebras is addressed.

In the classical setting, incremental algorithms have the
disadvantage of sometimes computing larger Gröbner bases for intermediate ideals, only to
discard them later on.
In order to mitigate this misfeature,
the F5 algorithm uses a signature ordering taking into account the
degree of the polynomials first, in order to process lower-degree elements first.
In the Tate setting, the degree no longer makes sense and a better measure of 
progression of the algorithms is the valuation.
Nonetheless, in analogy with the classical setting, an incremental
algorithm could perform intermediate computations to high valuation and just discard
them later on.
The second algorithm we will present, called the VaPoTe (valuation over position 
over term) algorithm, uses an analogous idea to that of F5 to mitigate this problem.

\smallskip

\noindent
\textit{Organization of the article.}
In Section~\ref{sec:ingredients}, we recall the basic definitions and
properties of Tate algebras and Gröbner bases over them, together
with the principles of the G2V algorithm. 
Sections \ref{sec:PoTe} and \ref{sec:VaPoTe} are devoted to the 
PoTe and the VaPoTe
algorithms respectively: they are presented and their correctness
and termination are proved.
Finally, implementation, benchmarks and possible future improvements
are discussed in Section~\ref{sec:implementation}.

\smallskip

\noindent
\textit{Notations.}
Throughout this article, we fix a positive integer $n$ and use
the short notation $\X$ for $(X_1, \ldots, X_n)$.
Given $\i = (i_1, \ldots, i_n) \in \NN^n$, we shall write 
$\X^\i$ for $X_1^{i_1} \cdots X_n^{i_n}$.

\section{Ingredients}
\label{sec:ingredients}

In this section, we present the two main ingredients we are going to mix 
together later on. They are, first, the G2V~\cite{GGV10} and 
GVW~\cite{GVW16} signature-based algorithms, and, second, the Tate 
algebras and the theory of Gröbner bases over them as developed 
in~\cite{CVV}.

\subsection{The G2V algorithm}
\label{ssec:G2V}

In what follows, we present the G2V algorithm which was designed by Gao, 
Guan and Volny IV in~\cite{GGV10} as an incremental variant of the 
classical F5 algorithm. Our presentation includes the cover 
criterion which was formulated later on in~\cite{GVW16} by
Gao, Volny IV and Wang.
The incremental point of view is needed for the application we will 
discuss in Section~\ref{sec:VaPoTe}. Moreover we believe that it has two 
extra advantages: first, it leads to simplified notations and, more 
importantly, it shows clearly where intermediate inter-reductions are 
possible.

Let $k$ be a field and $\kX$ denote the ring of polynomials over $k$ 
with indeterminates $\X$. 
We endow $\kX$ with a fixed monomial order $\leq_\omega$.
Let $I_0$ be an ideal in $\kX$. Let $G_0$ be a Gröbner basis of $I_0$ 
with respect to $\leq_\omega$.
Let $f \in \kX$. We aim at computing a GB of the ideal $I = I_0 +
\left\langle f \right\rangle.$
Let $M \subset \kX \times \kX$ be the $\kX$-sub-module defined by
the $(u,v)$ such that $uf-v \in I_0$.
The leading monomial $\LM(u)$ of $u$ is the \emph{signature} of $(u,v)$.

\begin{deftn}[Regular reduction]
\label{def:regred}
  Let $p_1=(u_1,v_1)$ and $p_2=(u_2,v_2)$ be in $M$.  We say that $p_1$ is
  \textit{top-reducible} by $p_2$ if
  \begin{enumerate}
    \item either $v_2=0$ and $\LM(u_2)$ divides $\LM(u_1)$,
    \item or $v_1v_2 \neq 0$, $\LM(v_2)$ divides $\LM(v_1)$ and:
      $$\frac{\LM(v_1)}{\LM(v_2)} \cdot \LM (u_2) \leq \LM(u_1).$$
  \end{enumerate} 
  The corresponding top-reduction is
  \[p = p_1-tp_2=(u_1 - t u_2, v_1-t v_2) \] where $t = \frac{\LM(u_1)}{\LM(u_2)}$ is the
  first case and $t = \frac{\LM(v_1)}{\LM(v_2)}$ in the second case.  This top-reduction
  is called \textit{regular} when $\LM(u_1) > t \LM(u_2)$, that is when the signature of the
  reduced pair $p$ agrees with that of $p_1$; it is called \textit{super} otherwise.
\end{deftn}


\begin{deftn}[Strong Gröbner basis]
\label{def:SGB}
  A finite subset $G$ of $M$
  is called a \textit{strong Gröbner basis} (SGB, for short) of $M$ if
  any nonzero $(u,v) \in M$ is top-reducible by some element of $G$.
\end{deftn}

The G2V strategy derives the computation of a Gröbner basis through 
the computation of an SGB. They are related through the following 
proposition.

\begin{prop}
\label{prop:SGBtoGB}
  Suppose that $G= \{(u_1,v_1),\dots,(u_s,v_s)  \}$ is an SGB of $M.$ Then:
  \begin{enumerate}
    \item $\{ u \textrm{ s.t. } (u,0) \in G \}$ is a Gröbner basis of $(I_0{:}f).$
    \item $\{ v \textrm{ s.t. } (u,v) \in G \textrm{ for some } u \}$ is a Gröbner basis of $I.$
  \end{enumerate}
\end{prop}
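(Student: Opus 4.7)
The plan is to establish the two parts in order, first proving (1) and then leveraging it to prove (2). In both cases, membership of the listed elements in the relevant ideals is immediate from the definition of $M$: indeed, if $(u,0)\in G\subset M$ then $uf\in I_0$, so $u\in (I_0{:}f)$; and if $(u,v)\in G$ then $v = uf - (uf-v) \in \langle f\rangle + I_0 = I$. The heart of the proof is the Gröbner-basis property, i.e.\ that the leading monomials of these elements generate the leading ideal on each side.

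For (1), I would take any nonzero $u\in (I_0{:}f)$; then $uf\in I_0$, so $(u,0)\in M$. By the SGB hypothesis, the pair $(u,0)$ is top-reducible by some $(u_i,v_i)\in G$. Case~(2) of Definition~\ref{def:regred} is excluded because it requires $v_1 v_2\neq 0$, while here $v_1 = 0$. Hence case~(1) applies: $v_i = 0$ and $\LM(u_i) \mid \LM(u)$. This is exactly what is needed to conclude that $\{u : (u,0)\in G\}$ is a Gröbner basis of $(I_0{:}f)$.

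For (2), I would take any nonzero $h\in I$, write $h = u_0 f + h_0$ with $u_0\in\kX$ and $h_0\in I_0$, so that $(u_0,h)\in M$. The key step is to choose within the coset $u_0 + (I_0{:}f)$ a representative $u$ whose leading monomial avoids those of the Gröbner basis produced by~(1). Concretely, using part~(1) one reduces $u_0$ modulo the $u_i$ with $(u_i,0)\in G$; each such reduction subtracts a multiple of some $(u_i,0)\in M$, hence keeps the second coordinate equal to $h$, so one obtains $(u,h)\in M$ with $\LM(u)$ not divisible by any $\LM(u_i)$ such that $v_i = 0$. Applying the SGB hypothesis to $(u,h)$, case~(1) of Definition~\ref{def:regred} is now ruled out by this choice; hence case~(2) applies, which says there is $(u_i,v_i)\in G$ with $v_i\neq 0$ and $\LM(v_i)\mid \LM(h)$. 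This gives the Gröbner basis property for $I$.

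The main obstacle is the point in (2) where case~(1) of the top-reducibility could \emph{a priori} occur and give no information about $\LM(h)$; the remedy is precisely to use part~(1) to clean up the signature component first, so that only case~(2) remains possible. One may equivalently phrase this by choosing $u$ of minimal leading monomial in its coset and arguing by contradiction: any top-reduction of the form~(1) would strictly decrease $\LM(u)$ while staying in $M$ with unchanged second coordinate, violating minimality.
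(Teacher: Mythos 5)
Your proof is correct and follows essentially the same route as the paper's: part (1) is identical, and for part (2) your normal-form reduction of the signature component against the Gröbner basis from part (1) is just the constructive version of the paper's choice of a representative $u$ with minimal leading monomial (which you also note as an equivalent phrasing). The only point worth spelling out is the degenerate case where $u$ reduces to $0$ (equivalently $\LM(h)\in\LM(I_0)$), which the paper treats separately before the minimality argument; there case (1) of Definition~\ref{def:regred} is still excluded and case (2) again yields a reducer $(u_i,v_i)$ with $\LM(v_i)$ dividing $\LM(h)$, so your conclusion stands.
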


To compute an SGB, we rely on J-pairs instead of S-polynomials.

\begin{deftn}[J-pair] 
\label{def:Jpair}
  Let  $p_1=(u_1,v_1)$ and $p_2=(u_2,v_2)$ be two elements in $M$
  such that $v_1 v_2 \neq 0$.
  Let $t = \lcm (\LM(v_1), \LM(v_2))$ and set $t_i = t / \LM(v_i)$
  for $i \in \{1,2\}$. Then:

  \noindent $\bullet$\hspace{1ex}%
  if $t_1 \LM(u_1) < t_2 \LM(u_2)$, the \emph{J-pair} of $(p_1,p_2)$
  is $t_2 p_2$,

  \noindent $\bullet$\hspace{1ex}%
  if $t_1 \LM(u_1) > t_2 \LM(u_2)$, the \emph{J-pair} of $(p_1,p_2)$
  is $t_1 p_1$,

  \noindent $\bullet$\hspace{1ex}%
  if $t_1 \LM(u_1) = t_2 \LM(u_2)$, the \emph{J-pair} of $(p_1,p_2)$
  is not defined.
\end{deftn}

%

\begin{deftn}[Cover]
\label{def:cover}
  We say that $p=(u,v)$ is \emph{covered} by $G \subset M$ if
  there is a pair $(u_i,v_i) \in G$ such that
  $\LM(u_i)$ divides $\LM(u)$ and:
  $$\frac{\LM(u)}{\LM(u_i)} \cdot \LM(v_i) < \LM(v).$$
\end{deftn}

\begin{theo}[Cover Theorem]
\label{theo:cover_poly}
  Let $G$ be a finite subset of $M$ such that:
  \begin{itemize}
    \item $G$ contains $(1,f)$;
    \item the set $\{g \in \kX : (0,g) \in G\}$ forms a Gröbner basis of $I_{0}$.
  \end{itemize}
  Then $G$ is an SGB of $M$ iff every J-pair of $G$ is covered by~$G$.
\end{theo}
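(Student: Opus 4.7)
The plan is to prove both directions of the equivalence, with the backward direction being the main content.

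For the forward direction, take a J-pair $q = \tau p_i$ of $G$, coming from a pair $(p_i, p_j)$ in $G$, with signature $\sigma = \tau \LM(u_i)$ and $v$-leading monomial $\nu$. A key observation is that $p_j$ always provides a regular case-2 top-reduction of $q$, since the strict inequality $\tau_j \LM(u_j) < \sigma$ from the J-pair definition is precisely the regularity condition. Iterating top-reductions (choosing regular ones whenever possible) then preserves $\sigma$ while strictly decreasing $\LM(v)$, so the process terminates. At termination the current pair is reduced either by a syzygy $(u_k, 0) \in G$ (case~1) or by a super case-2 reduction using some $(u_k, v_k) \in G$, or it has already become of the form $(u', 0)$ which is then reducible by a syzygy in $G$; in every case the final reducer satisfies $\LM(u_k) \mid \sigma$ together with the strict inequality required in Definition~\ref{def:cover} (the current $\LM(v)$ is already strictly smaller than $\nu$ after the first, unavoidable regular reduction), yielding a cover of $q$.

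For the backward direction, assume every J-pair of $G$ is covered by $G$ and suppose for contradiction that some $(u, v) \in M \setminus \{0\}$ is not top-reducible by $G$; choose such a pair with $\LM(u)$ minimal and, subject to that, with $\LM(v)$ minimal. The case $u = 0$ is disposed of by the Gröbner basis hypothesis on $\{g : (0, g) \in G\}$, which immediately supplies a case-2 reducer. For $u \neq 0$, since $G$ generates $M$ as a $\kX$-module (thanks to $(1, f) \in G$ together with the Gröbner basis part), one writes $(u, v) = \sum_\ell c_\ell \tau_\ell p_\ell$ with $p_\ell = (u_\ell, v_\ell) \in G$, $c_\ell \in k^\ast$, and $\tau_\ell$ monomials, and picks an expression minimising $\sigma^\star = \max_\ell \tau_\ell \LM(u_\ell)$ and, subject to that, minimising the maximum of $\tau_\ell \LM(v_\ell)$ over the indices $\ell$ attaining $\sigma^\star$.

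The heart of the argument is then a case analysis. If $\LM(u) < \sigma^\star$, the leading $u$-terms of the summands with $\tau_\ell \LM(u_\ell) = \sigma^\star$ must cancel, producing an internal syzygy encoded by a J-pair of two such $p_\ell, p_m \in G$; applying the cover hypothesis to this J-pair and substituting the covering element back into the expression strictly improves the minimisation criteria, contradicting minimality. If $\LM(u) = \sigma^\star$, a parallel analysis of the same summands either directly yields a top-reducer of $(u, v)$ (contradiction), or, via a J-pair and its cover, produces a new expression of $(u, v)$ with the same signature but strictly smaller $\LM(v)$, again contradicting the minimality of our choice. The main obstacle is precisely this translation step: matching the multipliers $\tau_\ell$ of the chosen expression with the $t_i, t_j$ of the J-pair construction, and verifying that the strict inequality in the definition of ``covered'' produces a genuinely improved expression rather than merely an equivalent one.
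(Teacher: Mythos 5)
Your forward direction is correct and is essentially the paper's argument (carried out there for the Tate version, Theorem~\ref{theo:cover}): the J-pair always admits a first regular top-reduction by the other member of the pair, iterated regular reductions preserve the signature while strictly decreasing $\LM(v)$, and the terminal pair can only be reduced super-reducibly or by a syzygy, which hands you exactly the element of $G$ demanded by Definition~\ref{def:cover}.

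The backward direction, however, has a genuine gap, located precisely at the step you flag as ``the main obstacle''. In your case $\LM(u) < \sigma^\star$ you claim that the cancellation of the leading $u$-terms of the summands attaining $\sigma^\star$ is ``encoded by a J-pair of two such $p_\ell, p_m$''. It is not: by Definition~\ref{def:Jpair} a J-pair of $(p_\ell,p_m)$ is built from $t=\lcm(\LM(v_\ell),\LM(v_m))$, i.e.\ from a coincidence of the \emph{$v$}-leading monomials, whereas your cancellation is a coincidence $\tau_\ell\LM(u_\ell)=\tau_m\LM(u_m)$ of the \emph{$u$}-leading monomials; the multipliers $\tau_\ell,\tau_m$ bear no relation to $t/\LM(v_\ell)$ and $t/\LM(v_m)$. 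Signature-level cancellations correspond to syzygies of $M$, and the hypotheses give no control over those (nothing is assumed about the pairs $(u,0)\in G$ beyond what the Gr\"obner basis of $I_0$ forces), so the cover hypothesis simply does not reach this configuration; the same difficulty resurfaces in your case $\LM(u)=\sigma^\star$ when you try to verify that substituting a covering element strictly improves your minimisation criteria. The paper sidesteps all of this by never forming a representation $\sum_\ell c_\ell\tau_\ell p_\ell$: it takes a minimal non-top-reducible pair (ordered by signature and then by $\LM(v)$, with an additional valuation layer and a topological argument in the Tate case, Lemmas~\ref{lem:cover-lem-W1-min-u} and~\ref{lem:cover-lem-W2-ltv}), shows from the cover hypothesis alone that some $p_1\in G$ with $\LT(u_1)\mid\LT(u_0)$ and $t_1\LT(v_1)$ minimal has $t_1p_1$ not regularly top-reducible (Lemma~\ref{lem:cover-lem-covering-csq} --- the only place where J-pairs and covers are used), and then derives a contradiction from $p_0-t_1p_1$, whose strictly smaller signature expels it from the set of counterexamples. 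To salvage a representation-based proof you would need an independent mechanism for lowering $\sigma^\star$ across signature cancellations, which the hypotheses of the theorem do not supply.
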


This theorem leads naturally to the G2V algorithm
(see~\cite[Fig.~1]{GGV10}) which is rephrased hereafter in 
Algorithm~\ref{algo:PoTe} (page~\pageref{algo:PoTe}).
We underline that, in Algorithm~\ref{algo:PoTe}, the SGB does not 
entirely appear. Indeed, we remark that one can always work with pairs 
$(\LM(u), v)$ in place of $(u,v)$, reducing then drastically the memory 
occupation and the complexity.
The algorithm maintains two lists $G$ and $S$ which are
related to the SGB in construction as follows:
$G \cup (S \times \{0\})$ is equal to the set of all $(\LM(u), v)$
when $(u,v)$ runs over the SGB. The criterion coming from the
cover theorem is implemented on lines~\ref{line:crit1} 
and~\ref{line:crit2}: the first (resp. the second) statement 
checks if $(u,v)$ is covered by an element of $G$ (resp. an element
of $S \times \{0\}$).


\medskip

\noindent
\textit{Syzygies.}
The G2V algorithm does not give a direct access to the module of 
syzygies of the ideal.
However, it does give access to a GB of $(I_0{:}f)$ (see Proposition~\ref{prop:SGBtoGB}), from 
which one can recover partial information about the syzygies, as shown 
below.

\begin{deftn}
\label{def:syzygy}
Given $f_1,\dots,f_m \in \kX$, we define
\[Syz(f_1,\dots,f_m) =
\Big\{\,\, (a_1,\dots,a_m) \in \kX^m \,\textrm{ s.t.} \, \sum_{i=1}^m a_i f_i=0 \,\, \Big\}. \]
\end{deftn}

\begin{lem}
\label{lem:syzygy}
Let $f_1,\dots,f_m$ generate $I_0$ and
let $u_1,\dots,u_s$ generate $(I_0{:}f)$.
For $i \in \{1, \ldots, s\}$, we write 
$$-u_i f = a_{i,1}f_1+\dots+a_{i,m}f_m
\qquad
(a_{i,j} \in \kX)$$
and define
$z_i=(a_{i,1},\dots,a_{i,m},u_i) \in Syz(f_1,\dots,f_m,f)$.
Then \[Syz(f_1,\dots,f_m,f)=(Syz(f_1,\dots,f_m) \times \{ 0 \})+\left\langle z_1,\dots,z_s \right\rangle.\]
\end{lem}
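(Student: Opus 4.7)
The plan is a standard double-inclusion argument, which is straightforward once one tracks the definitions carefully.

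The inclusion $\supseteq$ is immediate: every element of $Syz(f_1,\dots,f_m) \times \{0\}$ is manifestly a syzygy of $(f_1,\dots,f_m,f)$, and each $z_i$ is too, by the very choice of the coefficients $a_{i,j}$, since $\sum_j a_{i,j} f_j + u_i f = -u_i f + u_i f = 0$. Hence the right-hand side is contained in $Syz(f_1,\dots,f_m,f)$.

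For the non-trivial inclusion $\subseteq$, I would start with an arbitrary syzygy $(b_1,\dots,b_m,c) \in Syz(f_1,\dots,f_m,f)$, so that $\sum_j b_j f_j + c f = 0$. Rewriting this as $c f = -\sum_j b_j f_j \in I_0$ shows that $c$ belongs to the colon ideal $(I_0{:}f)$. Since $u_1,\dots,u_s$ generate $(I_0{:}f)$, we can write $c = \sum_{i=1}^s \lambda_i u_i$ for some $\lambda_i \in \kX$.

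The key manipulation is then to subtract $\sum_i \lambda_i z_i$ from the original syzygy. The last coordinate of the difference is $c - \sum_i \lambda_i u_i = 0$ by construction, and a direct computation in the first $m$ coordinates yields
\[
\sum_j \Bigl(b_j - \sum_i \lambda_i a_{i,j}\Bigr) f_j \;=\; \sum_j b_j f_j - \sum_i \lambda_i \sum_j a_{i,j} f_j \;=\; -c f + \sum_i \lambda_i u_i f \;=\; 0,
\]
using both that $(b_1,\dots,b_m,c)$ is a syzygy and the defining identity $\sum_j a_{i,j} f_j = -u_i f$. Hence $(b_1,\dots,b_m,c) - \sum_i \lambda_i z_i$ lies in $Syz(f_1,\dots,f_m)\times\{0\}$, which proves the desired inclusion.

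There is no real obstacle here: the only conceptual point is recognizing that the last-coordinate condition forces $c \in (I_0{:}f)$, which is exactly what makes the generators $u_i$ (and hence the $z_i$) sufficient to clear out the last coordinate. Everything else is a bookkeeping computation.
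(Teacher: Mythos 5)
Your proof is correct and follows essentially the same route as the paper's: observe that the last coordinate of a syzygy must lie in $(I_0{:}f)$, express it in terms of the generators $u_i$, and subtract the corresponding combination of the $z_i$ to land in $Syz(f_1,\dots,f_m)\times\{0\}$. You simply spell out the verification steps that the paper leaves implicit.
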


\begin{proof}
Let $(a_1,\dots,a_m, u) \in Syz(f_1,\dots,f_m,f)$.
Then $u \in (I_0{:}f)$
and we can write 
$u=\sum_{i=1}^s b_i u_i.$
Then the syzygy $(a_1,\dots,a_m,u)-\sum_{i=1}^s b_i z_i$ has its last coordinate equal to $0$
and thus belongs to $(Syz(f_1,\dots,f_m) \times \{ 0 \})$, which is enough to conclude. 
\end{proof}

\subsection{Tate algebras}
\label{ssec:tate}

\noindent
\textit{Definitions.}
We fix a field $K$ equipped with a discrete valuation $\val : K \to \Z 
\sqcup \{+\infty\}$, normalized by $\val(K^\times) = \Z$. We assume 
that $K$ is complete with respect to the distance defined by $\val$. We 
let $\Kz$ be the subring of $K$ consisting of elements of nonnegative 
valuation and $\pi$ be a uniformizer of $K$, that is an element of 
valuation $1$. We set $\Kb = \Kz/\pi\Kz$.
The Tate algebra $\KX$ is defined by:
\begin{equation}
\label{eq:Tatealg}
\KX := \Big\{ \sum_{\i \in \NN^{n}} a_{\i}\X^{\i}
\text{ s.t. }
a_{\i}\in K \text{ and }
\val(a_\i) \xrightarrow[|\i| \rightarrow +\infty]{} +\infty
\Big\}
\end{equation}
Series in $\KX$ have a natural analytic interpretation: they are 
analytic functions on the closed unit disc in $K^n$. We recall that
$\KX$ is equipped with the so-called Gauss valuation defined by:
$$\val\Big(\sum_{\i \in\N^n} a_\i X^\i\Big) =
\min_{\i\in \N^n} \val(a_\i).$$
Series with nonnegative valuation form a subring $\KzX$ of $\KX$.
The reduction modulo $\pi$ defines a surjective homomorphism of
rings $\KzX \to \KbX$.

\medskip

\noindent
\textit{Terms and monomials.}
By definition, an integral \emph{Tate term} is an expression of 
the form $a \X^\i$ with $a \in \Kz$, $a \neq 0$ and $\i \in \NN^n$.
Integral Tate terms form a monoid, denoted by $\TzX$, which is abstractly
isomorphic to $(\Kz \backslash\{0\}) \times \NN^n$. We say that two Tate
terms $a \X^\i$ and $b \X^\j$ are equivalent when $\val(a) = 
\val(b)$ and $\i = \j$. Tate terms modulo equivalence define a
quotient $\TTzX$ of $\TzX$, which is isomorphic to $\N \times
\NN^n$. The image in $\TTzX$ of a term $t \in \TzX$ is called
the \emph{monomial} of $t$ and is denoted by $\mon(t)$.

We fix a monomial order $\leq_\omega$ on $\N^n$ and order
$\TTzX \simeq \N \times \NN^n$ lexicographically by block with respect 
to the reverse natural ordering on the first factor $\N$ and the order 
$\leq_\omega$
on $\NN^n$. Pulling back this order along the morphism $\mon$, we
obtain a preorder of $\TzX$ that we shall continue to denote by $\leq$.
The \emph{leading term} of a Tate 
series $f = \sum a_\i \X^\i \in \KzX$ is defined by:
$$\LT(f) = \displaystyle \max_{\i \in \N^n} \,a_\i X^\i \in \TzX.$$
We observe that the $a_\i X^\i$'s are pairwise nonequivalent in $\TzX$, 
showing that there is no ambiguity in the definition of $\LT(f)$. The
\emph{leading monomial} of $f$ is by definition $\LM(f) = \mon(\LT(f))$.

\medskip

\noindent
\textit{Gröbner bases.}
The previous inputs allow us to define the notion of Gröbner bases for 
an ideal of $\KzX$.

\begin{deftn}
Let $I$ be an ideal of $\KzX$.
A family $(g_1,\dots,g_s) \in I^s$ is a Gröbner basis (in short, GB)
of $I$ if, for all $f \in I$, there exists $i \in \{1, \ldots, s\}$
such that $\LM(g_i)$ divides $\LM(f)$.
\end{deftn}

A classical argument shows that any GB of an ideal $I$ generates $I$.
The following theorem is proved in \cite[Theorem~2.19]{CVV}.

\begin{theo}
Every ideal of $\KzX$ admits a GB.
\end{theo}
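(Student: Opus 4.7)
The plan is to mimic the classical proof for polynomial rings: extract the set of leading monomials of $I$, show that it behaves like a monomial ideal in the monoid $\TTzX$, and apply Dickson's lemma. The only genuine difference with the polynomial case is that the monoid of monomials now carries an extra $\N$-factor coming from the $\pi$-adic valuation, so I will be applying Dickson's lemma in $\N^{n+1}$ rather than in $\N^n$.

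More precisely, let $I$ be a nonzero ideal of $\KzX$ and define $\LM(I) := \{\LM(f) : f \in I \setminus \{0\}\}$, viewed as a subset of $\TTzX \simeq \N \times \NN^n$. The first task is to show that $\LM(I)$ is upward-closed under divisibility: if $m \in \LM(I)$ and $\mu \in \TTzX$, then $m\mu \in \LM(I)$. For this, I pick any $f \in I$ with $\LM(f) = m$ and any integral Tate term $t$ with $\mon(t) = \mu$; since $\KzX$ is stable under multiplication by elements of $\Kz$ and by the $X_i$, we have $tf \in I$. Because the Gauss valuation is multiplicative on terms and $\leq_\omega$ is a monomial order on $\NN^n$, the leading term of $tf$ is exactly $t \cdot \LT(f)$, so $\LM(tf) = m\mu$.

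Under the isomorphism $\TTzX \simeq \N^{n+1}$, the divisibility relation on Tate monomials corresponds precisely to the componentwise order on $\N^{n+1}$. Dickson's lemma then guarantees that the set of minimal elements of $\LM(I)$ is finite; call them $m_1,\dots,m_s$. Choosing for each $i$ a Tate series $g_i \in I$ with $\LM(g_i) = m_i$ produces the desired basis: indeed, for every nonzero $f \in I$, the monomial $\LM(f) \in \LM(I)$ is divisible by some minimal element $m_i = \LM(g_i)$, which is exactly the Gröbner basis condition.

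The main (and only) subtle point is the very first step, namely checking that multiplying an element of $I$ by an integral Tate term acts on leading monomials in the expected way, so that $\LM(I)$ is genuinely an upward-closed subset of $\N \times \NN^n$. Once this is established, the argument is a direct import of the polynomial-ring proof. There is no convergence obstruction to address here, because the Gröbner basis definition only constrains the finitely many leading monomials $\LM(g_i)$; the separate fact that any GB actually generates $I$ (which would require handling infinite sums of reductions) has already been asserted as a classical consequence in the excerpt and is not part of the statement to be proved.
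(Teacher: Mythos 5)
Your argument is correct: once you check that multiplication by an integral Tate term sends leading monomials to leading monomials (which works exactly as you say, because the block order on $\TTzX \simeq \N \times \NN^n$ is translation-invariant), Dickson's lemma in $\N^{n+1}$ finishes the proof, and the paper's definition of a GB indeed only requires the divisibility-of-leading-monomials property. Note that the paper does not prove this theorem itself but cites \cite[Theorem~2.19]{CVV}, whose proof proceeds along the same standard lines, so your proposal matches the intended approach.
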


The explicit computation of such a GB is of course a central question.
It was addressed in~\cite{CVV}, in which the authors
describe a Buchberger algorithm and an F4 algorithm for this task.
The aim of the present article is to improve on these results by
introducing signatures in this framework and eventually design
F5-like algorithms for the computation of GB over Tate algebras.

\medskip

\noindent
\textit{Important remarks.}
For the simplicity of exposition, we chose to restrict ourselves to 
the Tate algebra $\KX$ and not consider the variants $\KX[\r]$ allowing 
for more general radii of convergence.
However, using the techniques developed in \cite{CVV} (paragraph
\emph{General log-radii} of Section 3.2), all the results we will obtain 
in this article can be extended to $\KX[\r]$.

In practice, the elements of $K$ need to be truncated to fit in the
memory of the computer; when doing so, we say that we are working at
\emph{finite precision}. We refer to~\cite{CVV} (see in particular
Theorem~3.8 and comments around it) for a thorough study of the
behaviour of GB with respect to finite precision computations.

\section{Position over term}
\label{sec:PoTe}

The goal of this section is to adapt the G2V algorithm to the setting 
of Tate algebras. Although all definitions, statements and algorithms
are \emph{formally} absolutely parallel to the classical 
setting, proofs in the framework of Tate algebras are more subtle,
due to the fact that the orderings on Tate terms are not well-founded but
only topologically well-founded. In order to accomodate this 
weaker property, we import ideas from~\cite{LWXZ} where the case
of local rings is considered.

\subsection{The PoTe algorithm}


\DontPrintSemicolon

\begin{algorithm}[t]
  \Input{$f_1,\dots,f_m$ in $\kX$ (resp. $\KzX$)}
  \Output{a GB of the ideal generated by the $f_i$'s}

  $Q \leftarrow (f_{1},\dots,f_{m})$\;
  $\GB \leftarrow \emptyset$\;
  ~\label{line:blank}\;
  \For{$f \in Q$}{
    $G \leftarrow \{(0,g) : g \in \GB\} \cup \{(1,f)\}$\;
    $S \leftarrow \{\LM(g) : g \in \GB\}$\;
    $B \leftarrow \{ \Jpair((1,f), (0,g)) : g \in \GB \}$\;
    \While{$B \neq \emptyset$}{
      \Pop $(u,v)$ from $B$, with smallest $u$\;
      \lIf{$(u,v)$ is covered by $G$}{\Continue}
      \lIf{$u$ is divisible by some $s \in S$}{\Continue}
      $v_0 \leftarrow \regularreduce(u, v, G)$\;
      \If{$v_0 = 0$}{
        \Add $u$ to $S$\;
      } \Else{
        \For{$(s,g) \in G$}
        {\If{ $\Jpair((u,v_0), (s,g))$ is defined}
          {\Add $\Jpair((u,v_0), (s,g))$ to $B$\;}
          }
        \Add $(u,v_0)$ to $G$\;
      }
    }
    $\GB \leftarrow \{ v : (u,v) \in G\}$\label{line:updateGB}\;
  }
  \Return \GB\;

  \caption{G2V (resp. PoTe) algorithm}
\label{algo:PoTe}
\end{algorithm}

\begin{algorithm}[t]
  \Input{$f_1,\dots,f_m$ in $\KzX$}
  \Output{a GB of the ideal generated by the $f_i$'s}

  $Q \leftarrow (f_{1},\dots,f_{m})$\;
  $\GB \leftarrow \emptyset$\;
  \While{$Q \neq \emptyset$\label{line:while}}{
    \Pop $f$ from $Q$, with smallest valuation\label{line:popf}\;
    $G \leftarrow \{(0,g) : g \in \GB\} \cup \{(1,f)\}$\;
    $S \leftarrow \{\LM(g) : g \in \GB\}$\;
    $B \leftarrow \{ \Jpair((1,f), (0,g)) : g \in \GB \}$\;
    \While{$B \neq \emptyset$}{
      \Pop $(u,v)$ from $B$, with smallest $u$\;
      \lIf{\label{line:crit1}$(u,v)$ is covered by $G$}{\Continue}
      \lIf{\label{line:crit2}$u$ is divisible by some $s \in S$}{\Continue}
      $v_0 \leftarrow \regularreduce(u, v, G)$\label{line:regred}\;
      \If{$\val(v_0) > \val(f)$\label{line:diff1}}{
        \Add $u$ to $S$;\,
        \Add $v_0$ to $Q$\label{line:diff2}\;
      } \Else{
        \For{\label{line:addJpairs}$(s,g) \in G$}
        {\If{ $\Jpair((u,v_0), (s,g))$ is defined}
          {\Add $\Jpair((u,v_0), (s,g))$ to $B$\;}
          }
        \Add $(u,v_0)$ to $G$\;
      }
    }
    $\GB \leftarrow \{ v : (u,v) \in G\}$\;
  }
  \Return \GB\;

  \caption{VaPoTe algorithm}
\label{algo:VaPoTe}
\end{algorithm}


We fix a monomial order $\leq_\omega$ of $\NN^n$ and write $\leq$
for the term order on $\TzX$ it induces.
We consider an ideal $I_0$ in $\KzX$ along with a GB $G_0$ of $I_0$.
Let $f \in \KzX$. We are interested in computing a GB of 
$I=I_0+\left\langle f \right\rangle$.
Mimicking what we have recalled in \S \ref{ssec:G2V},
we introduce the $\KzX$-sub-module $M \subset \KzX \times \KzX$ 
consisiting of pairs $(u,v)$ such that $uf-v \in I_0$.
The definitions of regular reduction (Definition~\ref{def:regred}),
strong Gröbner bases (Definition~\ref{def:SGB}), J-pair
(Definition~\ref{def:Jpair}) and cover (Definition~\ref{def:cover})
extend \emph{verbatim} to the context of Tate algebras, with
the precaution that the leading monomial is now computed with 
respect to the order $\leq$ as explained in Section \ref{ssec:tate}.

\begin{prop}
  Suppose that $G= \{(u_1,v_1),\dots,(u_s,v_s)  \}$ is an SGB of $M.$ Then:
  \begin{enumerate}
    \item $\{ u \textrm{ s.t. } (u,0) \in G \}$ is a Gröbner basis of $(I_0 : f).$
    \item $\{ v \textrm{ s.t. } (u,v) \in G \textrm{ for some } u \}$ is a Gröbner basis of $I.$
  \end{enumerate}
\end{prop}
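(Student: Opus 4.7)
The plan is to mirror the classical proof of Proposition~\ref{prop:SGBtoGB}, while carefully handling the fact that the term order on $\TzX$ is only topologically well-founded. The key idea in both parts is to realize an element of the relevant ideal as the second coordinate of an element of $M$ and then invoke the SGB hypothesis directly.

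For part~(1), I would take a nonzero $u \in (I_0 : f)$; then $(u,0) \in M$, and the SGB property yields a top-reducer $(u_i,v_i) \in G$. Because the second component of $(u,0)$ vanishes, only case~(1) of Definition~\ref{def:regred} is applicable, which forces $v_i = 0$ and $\LM(u_i) \mid \LM(u)$. This is exactly the Gröbner basis condition for $\{u : (u,0) \in G\}$ inside $(I_0 : f)$, so this half is short.

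For part~(2), I would start with a nonzero $v \in I$, write $v = \alpha f + w$ with $\alpha \in \KzX$ and $w \in I_0$, and consider $(\alpha, v) \in M$. The goal is to produce $(u_i, v_i) \in G$ with $v_i \neq 0$ and $\LM(v_i) \mid \LM(v)$. My plan is to argue by contradiction: assuming no such divisor exists, any type~(2) top-reduction would furnish precisely the divisor we seek, so every top-reduction of $(\alpha, v)$ must be of type~(1); this replaces $\alpha$ by $\alpha - t u_i$ for some $(u_i,0) \in G$, preserves membership in $M$ (since $u_i f \in I_0$) and strictly decreases the leading term. Iterating produces an infinite sequence $(\alpha_k)$ of signatures whose leading terms strictly descend in $\TzX$.

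The main obstacle, and the point that diverges from the classical proof, is exploiting this descending sequence in the non-well-founded Tate setting. My plan is to use the block structure of $\leq$: in a strictly descending chain in $\TTzX \simeq \N \times \N^n$ with the first factor reversed, the valuation coordinate is nondecreasing, and if it were bounded it would eventually stabilize and force an infinite descent under the well-ordered $\leq_\omega$, which is impossible. Therefore $\val(\LT(\alpha_k)) \to \infty$, and since $\val(\alpha_k) = \val(\LT(\alpha_k))$ in $\KzX$, the signatures converge to $0$ in the $\pi$-adic topology. Passing the identity $\alpha_k f - v \in I_0$ to the limit, using that $I_0$ is closed in $\KzX$---a standard property of Tate ideals admitting a GB, see~\cite{CVV}---yields $v \in I_0$. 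Then $(0, v) \in M$ is itself top-reducible, and the vanishing of its first component forces the reducer to be of the form $(0, v_i)$ with $v_i \neq 0$ and $\LM(v_i) \mid \LM(v)$, contradicting the standing assumption.
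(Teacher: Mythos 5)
Your part~(1) and the divisibility bookkeeping in part~(2) are correct, and part~(1) coincides with the paper's argument. For part~(2), however, you take a genuinely different route at the crux, namely the point where the non-well-foundedness of the term order must be confronted. The paper first splits on whether $\LM(v) \in \LM(I_0)$: in the remaining case it shows that every $a$ with $(a,v) \in M$ satisfies $\LM(a) \geq \LM(v)/\LM(f)$, so that the set of possible signatures is bounded below and therefore admits a minimum (only finitely many valuations can occur below the bound, and $\leq_\omega$ is a well-order at each fixed valuation); picking $a$ with $\LM(a)$ minimal, a type-(1) reduction is then impossible and the SGB property immediately produces a type-(2) reducer. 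You instead let the type-(1) descent run forever, observe that the valuation of $\LT(\alpha_k)$ must tend to infinity, and pass to the limit using the closedness of $I_0$ to conclude $v \in I_0$, landing in the first case. Your argument is sound --- it is in fact the same convergence trick the paper itself uses in Lemma~\ref{lem:cover-lem-W1-min-u} --- but it invokes the closedness of ideals of $\KzX$, an external fact the paper's proof of this proposition does not need. You could avoid it: as soon as $\val(\alpha_k f) > \val(v)$ you already have $\LM(v) = \LM(v - \alpha_k f) \in \LM(I_0)$, and the top-reducibility of a pair $(0,h')$ with $\LM(h')=\LM(v)$ finishes the proof without taking any limit. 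In exchange, your descent argument dispenses with the existence-of-a-minimal-signature step, which the paper asserts somewhat tersely; so each approach buys a small simplification at a different point. One cosmetic remark: when you say the reducer of $(0,v)$ ``must be of the form $(0,v_i)$'', this relies on the convention that the condition of Definition~\ref{def:regred}(2) with $u_1=0$ forces $u_2=0$; all you actually need is that the reducer is of type~(2), hence satisfies $\LM(v_i) \mid \LM(v)$ with $v_i \neq 0$.
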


\begin{proof}
Let $G$ be an SGB of M.

Let $h \in (I_0{:}f).$ Then $hf \in I_0$ and $(h,0) \in M$.
By definition, since $G$ is an SGB of $M$, there exists $(u,0) \in G$
such that $\LM(u)$ divides $\LM(h)$. This implies the first statement
of the proposition.

Let now $h \in I$. If $\LM(h) \in I_0$, there exists a pair $(0,h')
\in M$ with $\LM(h) = \LM(h')$. This pair is divisible by 
some $(0,v) \in G$, proving that $\LM(v)$ divides $\LM(h') = \LM(h)$ 
in this case. We now suppose that $\LM(h) \not\in \LM(I_0)$.
This assumption implies that any $a \in \KzX$ with $(a,h) \in M$
(\emph{i.e.} $af - h \in I_0$)
must satisfy $\LM(a) \geq \LM(h)/\LM(f)$. We can then choose 
a series $a \in \KzX$ such that $(a, h) \in M$ and $\LM(a)$ is 
minimal for this property. Moreover, since $G$ is an SGB, the pair 
$(a,h)$ has to be top-reducible by some $(u,v) \in G$.
If $v \neq 0$, we deduce that $\LM(v)$ divides $\LM(h)$. Otherwise, 
letting $t = \LT(a)/ \LT(u)$, we obtain $(a-tu,h) \in M$ with 
$\LM(a-tu) < \LM(a)$, contradicting the minimality of $\LM(a)$.
As a conclusion, we have shown that $\LM(v)$ divides $\LM(h)$ in
all cases, which readily implies~(2).
\end{proof}

\begin{theo}[Cover Theorem]
\label{theo:cover}
  Let $G$ be a finite subset of $M$ such that:
  \begin{itemize}
    \item $G$ contains $(1,f)$;
    \item the set $\{g \in \KzX : (0,g) \in G\}$ forms a Gröbner basis of $I_{0}$.
  \end{itemize}
  Then $G$ is an SGB of $M$ iff every J-pair of $G$ is covered by~$G$.
\end{theo}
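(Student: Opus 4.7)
The proof follows the classical G2V/GVW cover theorem template (Theorem~\ref{theo:cover_poly}); the substantial new work lies in the converse direction, and comes from the non-well-founded term order on $\TzX$. For the forward direction, suppose $G$ is an SGB and let $P$ be a J-pair, say $P = t_i p_i$ with $\LM(t_i u_i) > \LM(t_j u_j)$ for some $p_i,p_j \in G$. The pair $Q = t_i p_i - t_j p_j$ lies in $M$, has the same signature as $P$ (no cancellation in the $u$-component by the strict inequality), and its $v$-component has leading monomial strictly smaller than $\LM(P_v)$ by construction of the J-pair. Applying the SGB property to $Q$ yields a top-reducer in $G$; iterating top-reductions as needed --- noting that a super-reduction strictly drops the signature while a regular one strictly drops $\LM(v)$ --- produces an element of $G$ satisfying the cover condition of Definition~\ref{def:cover} for $P$.

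For the converse direction, assume every J-pair of $G$ is covered and let $(u,v) \in M$ be nonzero; I want to show it is top-reducible by $G$. The polynomial-case argument picks a counterexample with minimal signature and extracts from it an uncovered J-pair, contradicting the hypothesis. This approach breaks down in the Tate setting because $\leq$ on $\TzX$ is only \emph{topologically} well-founded (any strictly decreasing sequence has $\pi$-adic valuation tending to $+\infty$), so literal minima need not exist among infinite sets of counterexamples. Following~\cite{LWXZ}, who treat the analogous obstacle for local rings, I would restore the descent by an infinite-iteration argument: starting from a putative counterexample $(u,v)$, use the GB of $I_0$ contained in $G$ together with $(1,f) \in G$ to rewrite $uf - v$, then either lower the signature of the counterexample directly, or extract from an infinite chain of almost-counterexamples a limit pair via the completeness of $K$. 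In either case, one produces a J-pair of $G$ that cannot be covered, yielding the required contradiction. The hypotheses that $(1,f) \in G$ and that the pairs $(0,g) \in G$ contain a GB of $I_0$ intervene, respectively, to anchor the descent at signature~$1$ and to handle the pairs whose $v$-leading monomial lies in $\LM(I_0)$.

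The main obstacle is precisely this replacement of well-founded minimality by a topological/compactness argument. The polynomial proof's combinatorial bookkeeping transfers essentially verbatim once a distinguished counterexample is fixed; the delicate step is ensuring that the $\pi$-adic valuation growth forced along a descending chain of near-counterexamples does not destroy the divisibility relations between leading monomials required by Definition~\ref{def:cover}, and that any limit pair actually belongs to $M$ --- which is where completeness of $K$ and the topological nature of the Gauss valuation on $\KzX$ truly enter.
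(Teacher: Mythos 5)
Your forward direction matches the paper's in outline, though one detail is off: the paper applies \emph{regular} top-reductions to the J-pair until none is possible, and then uses the fact that the resulting pair, still lying in $M$, must be \emph{super} top-reducible by some $(u,v)\in G$; the equality $\LM(v)\cdot\LM(u_0)=\LM(v_0)\cdot\LM(u)$ forced by super-reducibility, combined with $\LM(v_0)<\LM(t_iv_i)$, is exactly the cover condition. A super reduction does not ``strictly drop the signature'' --- it is by definition the reduction for which the signature is \emph{not} preserved, and its role here is only as a certificate of covering, not as a step in a descent.

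The genuine gap is in the converse. You correctly identify the obstacle (the term order on $\TzX$ is only topologically well-founded), but the fix you propose --- extracting a limit pair from an infinite descending chain using completeness of $K$ --- is not what is needed and would not work: membership in the set $W$ of pairs not top-reducible by $G$ is not closed under limits (leading monomials are not continuous, and the limit of the $u$-components along a chain with strictly decreasing $\LM(u_k)$ may be $0$ or have an unrelated leading monomial), so a limit pair yields nothing to contradict. The paper's mechanism is combinatorial and uses the hypotheses differently. One first minimizes the \emph{valuation} $\nu$ of the $v$-component over $W$ (a minimum over a subset of $\NN\cup\{+\infty\}$, which always exists) and restricts to the stratum $W_1$ where it is attained. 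On that stratum, a putative infinite strictly decreasing chain $\LM(u_k)$ forces $u_kf\to 0$, hence $\val(u_kf)>\nu=\val(v_k)$ for $k$ large, hence $\LM(v_k)=\LM(v_k-u_kf)\in\LM(I_0)$ --- and such pairs were already excluded from $W$ because $G$ contains a GB of $I_0$ (Lemma~\ref{lem:cover-lem-no-u-eq0}). This is how minimality of $\LM(u)$ is restored (Lemma~\ref{lem:cover-lem-W1-min-u}); no limit is taken and completeness of $K$ is used only through the convergence $u_kf\to 0$. The remainder of the argument then still requires the uniqueness of $\LM(v)$ on the minimal stratum (Lemma~\ref{lem:cover-lem-W2-ltv}), the construction via the covering hypothesis of a reducer $p_1$ with $tp_1$ not regularly top-reducible (Lemma~\ref{lem:cover-lem-covering-csq}), a case split on $\LM(v_0)$ versus $\LM(tv_1)$ after subtracting it, and a separate treatment of the case $\nu=+\infty$, i.e.\ of pairs of the form $(u,0)$; none of these steps appears in your sketch.
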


The proof of Theorem~\ref{theo:cover} is presented in 
Section \ref{ssec:proofcover} below. Before this, let us observe that 
Theorem~\ref{theo:cover} readily shows that the G2V algorithm (see 
Algorithm~\ref{algo:PoTe}) extends 
\emph{verbatim} to Tate algebras. The resulting algorithm is called the 
PoTe\footnote{PoTe means ``\textbf{Po}sition over \textbf{Te}rm''.} 
algorithm.
The correctness of the PoTe algorithm is clear thanks to Theorem 
\ref{theo:cover}. Its termination is not \emph{a priori} guaranteed 
because the call to \regularreduce may enter an infinite loop 
(see \cite[Sec. 3.1]{CVV}). However, if we assume that all regular 
reductions terminate (which is guaranteed in practice by working at 
finite precision), the PoTe algorithm terminates as well thanks to the 
Noetherianity of $\KzX$.

\subsection{Proof of the cover theorem}
\label{ssec:proofcover}

Throughout this subsection, we consider a finite set $G$ satisfying the 
assumptions of Theorem~\ref{theo:cover}.

We first assume that $G$ is an SGB of $M$. Let $p_1, p_2 \in G$ and
write $p_i = (u_i, v_i)$ for $i \in\{1,2\}$.
We set $t = \lcm(\LM(v_1), \LM(v_2)) \in \TTzX$ and $t_i = t/\LM(v_i)$.
If $\LM(t_1 u_1) = \LM(t_2 u_2)$, the $J$-pair of $(p_1, p_2)$ is not
defined and there is nothing to prove. Otherwise, if $i$ (resp. $j$) 
is the index for which $\LM(t_i u_i)$ is maximal (resp.
$\LM(t_j u_j)$ is minimal), the $J$-pair of $(p_1, p_2)$ is
$t_i p_i$, which is regularly top-reducible by $p_j$. Continuing to
apply regular top-reductions by elements of $G$ as long as possible, 
we reach a pair $(u_0, v_0) \in M$
which is no longer regularly top-reducible by any element of $G$ and
for which $\LM(u_0) = \LM(t_i u_i)$ and $\LM(v_0) < \LM(t_i v_i)$.
Since $G$ is an SGB of $M$, $(u_0, v_0)$ must be super top-reducible
by some pair $(u,v) \in G$. By definition of super top-reducibility, 
$\LM(u)$ divides $\LM(u_0) = \LM(t_i u_i)$ and $\LM(v)\cdot\LM(u_0) = 
\LM(v_0) \cdot\LM(u)$. This shows that 
$\LM(v)\cdot\LM(u_i) < \LM(v_i)\cdot
\LM(u)$ and then that $(u,v)$ covers $t_i p_i$.

We now focus on the converse and assume that each $J$-pair of $G$
is covered by $G$. We define:
$$W=\big\{\,\, (u,v) \in M, \textrm{ top-reducible by no pair of } G \,\,\big\}$$
and assume by contradiction that $W$ is not empty.

  \begin{lem}
    \label{lem:cover-lem-no-u-eq0}
    The set $W$ does not contain any pair of the form $(u,v)$ with 
    $u = 0$ or $\LM(v) \in \LM(I_{0})$.
  \end{lem}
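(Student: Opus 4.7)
My plan is to recognize that a pair $(u,v) \in W$ falling under either alternative must admit a top-reduction by some $(0,g) \in G$ coming from the given Gröbner basis of $I_0$, contradicting the definition of $W$ as the set of pairs top-reducible by no element of $G$.

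I would first merge both alternatives. If $u = 0$, then $(0,v) \in M$ yields $-v = 0 \cdot f - v \in I_0$, so $v \in I_0$. Excluding the trivial pair $(0,0)$ (implicitly not in $W$, matching the ``nonzero $(u,v)$'' clause of Definition~\ref{def:SGB}), we have $v \neq 0$, and therefore $\LM(v) \in \LM(I_0)$. So both alternatives collapse to the single hypothesis $\LM(v) \in \LM(I_0)$.

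Under that hypothesis, the assumption on $G$ says that $\{g : (0,g) \in G\}$ is a Gröbner basis of $I_0$, hence there exists $(0,g) \in G$ with $\LM(g)$ dividing $\LM(v)$. I would then verify that $(u,v)$ is top-reducible by $(0,g)$ through case~(2) of Definition~\ref{def:regred}: indeed $v \cdot g \neq 0$, $\LM(g) \mid \LM(v)$, and the signature inequality $\tfrac{\LM(v)}{\LM(g)} \cdot \LM(0) \leq \LM(u)$ holds trivially since $(0,g)$ contributes no signature. The corresponding reduction $(u,v) - t(0,g) = (u, v - tg)$ with $t = \LT(v)/\LT(g)$ acts only on the $v$-component, so the signature $\LM(u)$ is preserved and the reduction is in fact regular. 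This contradicts $(u,v) \in W$.

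The main (and rather minor) obstacle is the bookkeeping around $\LM(0)$ in case~(2) of Definition~\ref{def:regred} when $u_2 = 0$: one needs to convince oneself that the signature inequality there is vacuously true in that degenerate situation. Once this convention is set, the lemma is an immediate consequence of the Gröbner-basis property of the ``syzygy part'' $\{g : (0,g) \in G\}$.
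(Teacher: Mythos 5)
Your proof is correct and follows essentially the same route as the paper's: reduce the case $u=0$ to $\LM(v)\in\LM(I_0)$ via $v\in I_0$, then use the hypothesis that $\{g : (0,g)\in G\}$ is a Gröbner basis of $I_0$ to exhibit a top-reducer $(0,g)$ of $(u,v)$, contradicting $(u,v)\in W$. Your extra care about the degenerate signature condition when the reducer has zero first component is a reasonable clarification but does not change the argument.
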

  \begin{proof}
    By our assumptions, if $\LM(v) \in \LM(I_{0})$, $v$ is reducible
    by some $g$ with $(0,g) \in G$. In particular, $(u,v)$ is 
    top-reducible by $(0,g)$ and cannot be in $W$.
    If $u = 0$, then $v \in I_0$ and we are reduced to the previous case.
  \end{proof}

  \begin{lem}
    \label{lem:cover-lem-covering-csq}
    Let $p_{0}=(u_{0},v_{0}) \in W$.
    Then there exists a pair $p_{1}= (u_{1},v_{1}) \in G$ such that
    $\LT(u_{1})$ divides $\LT(u_{0})$, say $\LT(u_{0}) = t_{1} \LT(u_{1})$,
    and $t_{1} \LT(v_{1})$ is minimal for this property.

    Furthermore, $t_{1} p_{1}$ is not regularly top-reducible by $G$.
  \end{lem}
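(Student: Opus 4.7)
My plan is to tackle the two assertions in turn. For existence, the set $C := \{(u,v) \in G : \LT(u) \text{ divides } \LT(u_0)\}$ is nonempty (it contains $(1,f)$) and finite (since $G$ is finite), so a minimizer $p_1 = (u_1,v_1)$ of $t_1 \LT(v_1)$, with $t_1 := \LT(u_0)/\LT(u_1)$, exists for the preorder on $\TzX$. Any such minimizer necessarily has $v_1 \neq 0$: otherwise $p_0$ would be top-reducible by $(u_1,0) \in G$ via case~(1) of Definition~\ref{def:regred}, contradicting $p_0 \in W$.

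For the ``Furthermore'' assertion, I argue by contradiction: suppose $t_1 p_1 = (t_1 u_1, t_1 v_1)$ is regularly top-reducible by some $p_2 = (u_2, v_2) \in G$. Case~(1) of Definition~\ref{def:regred} never produces a regular reduction (its multiplier is chosen so that $t\,\LM(u_2) = \LM(t_1 u_1)$), so we are in case~(2): $v_2 \neq 0$, $\LM(v_2) \mid \LM(t_1 v_1)$, and $s\,\LM(u_2) < \LM(t_1 u_1) = \LM(u_0)$ strictly, where $s := \LM(t_1 v_1)/\LM(v_2)$. I then build the J-pair of $p_1, p_2$. Setting $\sigma = \lcm(\LM(v_1), \LM(v_2))$ and $\sigma_i = \sigma/\LM(v_i)$, the monomial $\LM(t_1 v_1)$ is a common multiple of $\LM(v_1)$ and $\LM(v_2)$, hence divisible by $\sigma$; writing $\LM(t_1 v_1) = \rho\sigma$ yields $\rho\sigma_1 = t_1$ and $\rho\sigma_2 = s$ in $\TTzX$. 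The strict inequality above thus reads $\rho \sigma_2 \LM(u_2) < \rho \sigma_1 \LM(u_1)$, and the order-preservation of monomial multiplication on $\TTzX \simeq \N \times \NN^n$ lets me cancel $\rho$, giving $\LM(\sigma_2 u_2) < \LM(\sigma_1 u_1)$. So the J-pair of $(p_1,p_2)$ is defined and equals $\sigma_1 p_1$.

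Applying the cover hypothesis to $\sigma_1 p_1$, some $(u_3, v_3) \in G$ satisfies $\LM(u_3) \mid \sigma_1 \LM(u_1)$ and $\tfrac{\sigma_1 \LM(u_1)}{\LM(u_3)}\LM(v_3) < \sigma_1 \LM(v_1)$. Multiplying both relations by $\rho$ and using $\rho \sigma_1 \LM(u_1) = \LM(u_0)$, I obtain $\LM(u_3) \mid \LM(u_0)$ --- equivalently $\LT(u_3) \mid \LT(u_0)$ in $\TzX$ --- together with $\tfrac{\LM(u_0)}{\LM(u_3)}\LM(v_3) < t_1 \LM(v_1)$. This exhibits an element of $C$ strictly smaller than $p_1$ in the minimality criterion, yielding the desired contradiction. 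The main subtlety I expect to handle carefully is the bookkeeping between terms in $\TzX$ and monomials in $\TTzX$: in the Tate setting a monomial carries a valuation coordinate, so I must justify that term divisibility in $\TzX$ is equivalent to monomial divisibility in $\TTzX$ and that the preorder on $\TzX$ is compatible with monomial multiplication --- both of which hold and make the scaling manipulations by $\rho$ legitimate.
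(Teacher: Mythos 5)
Your proof is correct and follows essentially the same route as the paper's: existence by finiteness of $G$ and the presence of $(1,f)$, then, for the second claim, forming the J-pair of $p_1$ with the putative regular reducer $p_2$, invoking the cover hypothesis on that J-pair, and rescaling the covering pair (your $\rho$ is the paper's $t_1'$, your $\sigma_1$ its $\tau$) to contradict the minimality of $t_1\LT(v_1)$. You additionally spell out two points the paper leaves implicit --- that every candidate divisor pair has $v\neq 0$ (else $p_0$ would be top-reducible) and that case~(1) reductions are never regular --- which is a welcome clarification rather than a deviation.
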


  \begin{proof}
    We have already noticed that $u_{0}\neq 0$.
    Since $(1,f) \in G$, there exists a pair in $G$ satisfying the first
    condition. Since $G$ is finite, there exists one that further satisfies 
    the minimality condition.

    We assume by contradiction that $t_{1}p_{1}$ is regularly top-reducible by $G$.
    Consider $p_{2}=(u_{2},v_{2}) \in G$ be a regular reducer of $t_{1}p_{1}$, in particular there
    exists a term $t_{2}$ such that $t_{2} \LT(v_{2}) = t_{1}\LT(v_{1})$, and $t_{2}
    \LT(u_{2}) < t_{1}\LT(u_{1})$.
    The J-pair of $p_{1}$ and $p_{2}$ is then defined and equals
    $\tau \cdot (u_{1},v_{1})$
    with $\tau$ dividing $t_{1}$. Write $t_{1} = \tau t'_{1}$ for some term $t'_1$.
    By hypothesis, this J-pair is covered, so there exists $P = (U,V) \in G$ and a term
    $\theta$ such that $\theta \cdot \LT(U) = \tau \cdot \LT(u_{1})$ and 
    $\theta \cdot \LT(V) < \tau \cdot \LT(v_{1})$.
    As a consequence:
    \begin{align*}
    t'_{1} \theta \cdot \LT(U) & = t_{1} \cdot \LT(u_{1}) = \LT(u_{0})\\
    t'_{1} \theta \cdot \LT(V) & < t \cdot \LT(v_{1}).
    \end{align*}
    So $t'_{1}P$ contradicts the minimality of $p_{1}$.
  \end{proof}

  Let $\nu$ be the minimal valuation of a series $v$ for which $(u,v) 
  \in W$. We make the following additional assumption: $\nu < +\infty$.
  In other words, we assume that $W$ contains 
  at least one element of the form $(u,v)$ with $v \neq 0$. We set:
  $$W_1 = \big\{\,\, (u,v) \in W \text{ s.t. } \val(\LM(v)) = \nu \,\, \big\}.$$

  \begin{lem}
    \label{lem:cover-lem-W1-min-u}
    The set $L = \{\LM(u) : (u,v) \in W_{1}\}$ admits a minimal element.
  \end{lem}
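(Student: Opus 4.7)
My plan is to reduce the problem to well-ordering arguments, one in $\N$ (for valuations) and one in $\NN^n$ (for exponents), by first bounding from above the valuations of the leading monomials appearing in $L$. Since $\TTzX \simeq \N \times \NN^n$ is ordered reverse-lex (natural order on $\N$ reversed, then $\leq_\omega$), a minimum of $L$ corresponds to the \emph{largest} valuation combined with a $\leq_\omega$-minimal exponent among those realising that valuation. As valuations are non-negative integers and $\leq_\omega$ well-orders $\NN^n$, both extrema will exist as soon as I prove an upper bound on the valuations appearing in $L$.

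The core step is to show that every $(u,v) \in W_1$ satisfies $\val(u) \leq \nu - \val(f)$. I would argue by contradiction: if $\val(u) > \nu - \val(f)$, then $\val(uf) > \nu = \val(v)$, and writing $h = uf - v \in I_0$ the strict inequality forces the $-v$ part to dominate, so $\LM(h) = \LM(v)$. Since $h \in I_0$, this places $\LM(v)$ in $\LM(I_0)$, contradicting Lemma~\ref{lem:cover-lem-no-u-eq0} applied to $(u,v) \in W$. As a byproduct, the same argument shows that the non-emptiness of $W_1$ forces $\val(f) \leq \nu$, so $\nu - \val(f) \geq 0$ and the bound is non-vacuous.

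With the bound in hand, I would set $a^* = \max\{\val(u) : (u,v) \in W_1\}$, which is achieved because the set is a non-empty subset of $\{0,1,\ldots,\nu-\val(f)\} \subset \N$. Then $E = \{\i \in \NN^n : \pi^{a^*}\X^{\i} \in L\}$ is non-empty and, by well-foundedness of $\leq_\omega$, admits a minimal element $\i^*$. I claim $\pi^{a^*}\X^{\i^*}$ is a minimum of $L$: any competing element of $L$ either has valuation strictly less than $a^*$, hence is strictly larger in the reverse-lex order on $\TTzX$, or has valuation exactly $a^*$ with exponent $\geq_\omega \i^*$ by construction.

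The main obstacle I foresee is not technical but conceptual: one must keep in mind that in $\TTzX$ the valuation ordering is reversed, so an upper bound on valuations is precisely what one needs in order to produce a minimum in $L$. The one genuinely new ingredient beyond the ambient structure is Lemma~\ref{lem:cover-lem-no-u-eq0}, which forbids elements of $W$ to have $\LM(v) \in \LM(I_0)$ and is exactly what drives the valuation bound; identifying this as the right input is the non-obvious insight that makes the proof go through.
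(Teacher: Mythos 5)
Your proof is correct and rests on exactly the same key ingredient as the paper's: Lemma~\ref{lem:cover-lem-no-u-eq0} rules out $\val(uf) > \nu = \val(v)$ for $(u,v) \in W_1$, since that would force $\LM(v) = \LM(v - uf) \in \LM(I_0)$. The paper packages this as a contradiction with a strictly decreasing sequence $\LM(u_k)$, whose valuations must tend to infinity so that $u_k f \to 0$ in the Tate topology, whereas you make the resulting bound $\val(u) \leq \nu - \val(f)$ explicit and then extract the minimum directly from the block structure of the order on $\TTzX$; the two arguments are essentially the same.
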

  \begin{proof}
    We assume by contradiction that $L$ does not have a minimal element. 
    Thus, we can construct a sequence
    $(u_k,v_k)_{k \geq 1}$ with values in $W_{1}$ such that $\LM(u_k)$ 
    is strictly decreasing.
    As a consequence, in the Tate topology, $u_k f$ converges to $0$.
    Hence, for $k$ large enough, $\val(u_{k}f) > \nu = \val(v_k)$.
    From $W_{1} \subset M$, we get $v_{k} - u_{k}f \in I_{0}$ and $\LM(v_{k}) =
    \LM(v_{k}-u_{k}f) \in \LM(I_{0})$.
    By Lemma~\ref{lem:cover-lem-no-u-eq0}, this is a contradiction.
  \end{proof}

  Let $W_2$ be the subset of $W_1$ consisting of pairs $(u,v)$ for
  which $\LM(u)$ is minimal.
  Note that by Lemma~\ref{lem:cover-lem-no-u-eq0}, this minimal value 
  is nonzero.

  \begin{lem}
    \label{lem:cover-lem-W2-ltv}
    For any $(u_{1},v_{1}), (u_{2},v_{2}) \in W_{2}$, $\LM(v_{1}) = \LM(v_{2})$.
  \end{lem}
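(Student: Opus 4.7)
The plan is to argue by contradiction: assuming $\LM(v_1) \neq \LM(v_2)$, I will build from $(u_1,v_1)$ and $(u_2,v_2)$ a new pair in $M$ whose $v$-leading monomial still has valuation $\nu$ but whose $u$-leading monomial is strictly smaller than the common value $\LM(u_1) = \LM(u_2)$, contradicting the minimality of $\LM(u)$ in $W_{2}$.

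Up to swapping indices I may assume $\LM(v_1) > \LM(v_2)$. Since $\LM(u_1) = \LM(u_2)$, the leading coefficients of $u_1$ and $u_2$ have equal valuation, so there is a unit $c \in (\Kz)^\times$ with $\LT(u_1) = c\cdot \LT(u_2)$. Consider $p := (u_1,v_1) - c(u_2,v_2) \in M$. Cancellation of the leading $u$-terms gives $\LM(u_1 - cu_2) < \LM(u_1)$, while $\LM(cv_2) = \LM(v_2) < \LM(v_1)$ forces $\LM(v_1 - cv_2) = \LM(v_1)$, which still has valuation $\nu$. If $p$ happens to lie in $W$, we are done: $p \in W_{1}$ with a strictly smaller $\LM(u)$-component, contradicting the definition of $W_{2}$.

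Otherwise $p$ is top-reducible by some $(u_0,v_0) \in G$. A reduction through case~(2) of Definition~\ref{def:regred} is ruled out immediately: it would require $\LM(v_0) \mid \LM(v_1 - cv_2) = \LM(v_1)$ and $(\LM(v_1)/\LM(v_0))\cdot\LM(u_0) \leq \LM(u_1 - cu_2) < \LM(u_1)$, but these two conditions say exactly that $(u_1,v_1)$ is itself top-reducible by $(u_0,v_0)$, contradicting $(u_1,v_1) \in W$. Only first-kind reductions ($v_0=0$) are thus available, and they leave the $v$-part untouched; the same argument keeps ruling out second-kind reductions at every subsequent step. Iterating produces a sequence $(u^{(k)}, v_1 - cv_2) \in M$ with $\LM(u^{(k)})$ strictly decreasing and the $v$-part constant.

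The main obstacle is that, in the Tate setting, this sequence need not terminate in finitely many steps. If it does terminate, we reach a pair in $W$ lying in $W_1$ with $\LM(u) < \LM(u_1)$, yielding the desired contradiction (the degenerate case where the $u$-part becomes $0$ would force $v_1 - cv_2 \in I_{0}$, hence $\LM(v_1) \in \LM(I_{0})$, which is precluded by Lemma~\ref{lem:cover-lem-no-u-eq0} applied to $(u_1,v_1)$). If the reduction sequence is infinite, the structure of the term order—reverse natural order on the valuation factor, lex-by-block with the well-order $\leq_\omega$—forces $\val(\LM(u^{(k)})) \to +\infty$, since the exponent component can strictly decrease for $\leq_\omega$ only finitely many times at each fixed valuation. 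Because $\val(u) = \val(\LM(u))$ for any Tate series, this means $u^{(k)} \to 0$ in $\KzX$. Passing to the limit in the identity $u^{(k)} f - (v_1 - cv_2) \in I_{0}$ and using that $I_{0}$ is closed in $\KzX$ yields $v_1 - cv_2 \in I_{0}$, hence once again $\LM(v_1) \in \LM(I_{0})$, contradicting Lemma~\ref{lem:cover-lem-no-u-eq0}. In every case we reach a contradiction, so $\LM(v_1) = \LM(v_2)$.
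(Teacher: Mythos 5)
Your proof is correct and follows the same core strategy as the paper's: cancel the leading terms of $u_1$ and $u_2$ to produce $(u_1-cu_2,\,v_1-cv_2)$, observe that its $v$-part keeps leading monomial $\LM(v_1)$ (hence valuation $\nu$) while its $u$-part drops strictly below $\LM(u_1)$, and contradict the minimality defining $W_2$. Where you go further is on top-reducibility of this new pair: the paper dismisses it in one sentence (``otherwise $(u_1,v_1)$ would also be top-reducible''), which is immediate only for reductions of the second kind in Definition~\ref{def:regred}; you correctly note that first-kind reductions (by pairs $(u_0,0)$) are not excluded by that remark, and you handle them by iterating, treating both the terminating case (landing back in $W_1$ with a smaller $\LM(u)$, or the $u$-part vanishing, which puts $\LM(v_1)$ in $\LM(I_0)$ against Lemma~\ref{lem:cover-lem-no-u-eq0}) and the non-terminating case, where the topological well-foundedness of the term order forces $u^{(k)}\to 0$. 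This case analysis genuinely completes a step the paper leaves implicit. One small streamlining in the infinite case: rather than invoking closedness of $I_0$, you can argue as in the proof of Lemma~\ref{lem:cover-lem-W1-min-u} that for $k$ large enough $\val(u^{(k)}f) > \nu = \val(v_1-cv_2)$, whence $\LM(v_1-cv_2) = \LM\bigl(u^{(k)}f-(v_1-cv_2)\bigr) \in \LM(I_0)$ directly.
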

  \begin{proof}
    Let $(u_{1},v_{1})$ and $(u_{2},v_{2})$ in $W_{2}$, and assume that the leading terms
    are not equivalent, that is $\LM(v_{1})
    \neq \LM(v_{2})$.
    Without loss of generality, we can assume that $\LM(v_{1}) > \LM(v_{2})$.
    By construction of $W_{2}$, $\LM(u_{1}) = \LM(u_{2})$, that is 
    $\LT(u_{1}) = a \LT(u_{2})$ for some $a \in K$, $\val(a) = 0$.
    Since $u_{1}$ and $u_{2}$ are nonzero, we can write $u_{1} = \LT(u_{1}) + r_{1}$ and
    $u_{2} = \LT(u_{2}) + r_{2}$. Eliminating the leading terms, we obtain a new element
    $(u',v') = (r_{1}-ar_{2}, v_{1}-av_{2})$.
    By assumption, $\LM(v') = \LM(v_{1})$, and $\LM(u') < \LM(u_1)$.
    Observe that $(u',v')$ cannot be top-reduced by $G$ as otherwise, $(u_1,v_1)$ would
    also be top-reducible by $G$. Hence $(u',v') \in W_1$, contradicting the
    minimality of $\LM(u_1)$.
  \end{proof}

  Let now $p_0=(u_0,v_0) \in W_2.$
  From Lemma~\ref{lem:cover-lem-covering-csq}, there exists $p_{1} = (u_{1},v_{1}) \in G$
  and a term $t$ such that $\LT(t u_{1}) = \LT(u_{0})$ and $t p_{1}$ is not regular
  top-reducible by $G$.
  We define \[p_* = (u_*,v_*)=p_0-tp_1=(u_0,v_0)-t (u_1,v_1). \]
  We remark that $\LM(u_*) < \LM (u_0)$. Moreover $\LM (v_0)  \neq \LM(t v_1)$ since
  otherwise $p_0$ would be top-reducible by $p_1$, contradicting the fact
  that $p_0 \in W$.

  We first examine the case where $\LM (v_0) < \LM(t v_1)$.
  It implies that $\LM(v_*) = \LM(t v_{1}) > \LM(v_0)$. Let us 
  prove first
  that $p_* \not\in W$. We argue by contradiction. From $p_*
  \in W$, we would derive $\val(v_*) \geq \nu = \val(v_0)$
  and then $\val(v_*) = \val(v_0)$ since the inequality in the
  other direction holds by assumption. We conclude by noticing
  that $\LM(u_*) < \LM (u_0)$ contradicts the minimality of $\LM(u_0)$.
  So $p_* \not\in W$, \emph{i.e.} $p_{*}$ is top-reducible by $G$.
  Let $p_2 = (u_2,v_2) \in G$ be top-reducing~$p_*$.
  If $v_2=0$, then $\LM (u_2)$ divides $\LM(u_*)$.
  Besides, the pair
  $\textstyle 
  p_*' = (u_*',\,v_*) = \big(u_*-\frac{\LT(u_*)}{\LT (u_2)} u_2,\, v_*\big)$
  satisfies $\LM(u_*')<\LM (u_*)$ and thus cannot be in $W$ either.
  We iterate this process until we can only find a reductor $q=(U,V) 
  \in G$ with $V\neq 0$.
  Let $t_2 = \LM(v_*)/ \LM(V)$.
  Then $t_2 \LM(V) = \LM(v_*) = \LM(t v_1)$ and
  $t_2 \LM(U) \leq \LM (u_*) < \LM (t u_1)$ if $U \neq 0$.
  Therefore $q$ regularly top-reduces $t p_1$, which contradicts
  Lemma~\ref{lem:cover-lem-covering-csq}.

  Let us now move to the case where $\LM (v_0) > \LM(t v_1)$.
  Then $\LM (v_*)= \LM(v_0)$. Since $\LM(u_*) < \LM(u_0)$, 
  it follows that $p_* \notin W$, \emph{i.e.} $p_*$ is top-reducible by $G$.
  As in the previous case, we construct $q =(U,V) \in G$ with $V \neq 0$,
  and a term $t_2$ with the properties that
  $t_2 \LM(V) = \LM(v_*) = \LM(v_0)$ and
  $t_2 \LM(U) \leq \LM (u_*) < \LM (u_0)$
  if $U \neq 0$.
  Thus $q$ regularly top-reduces $p_0$, which contradicts $p_0 \in W$.

  As a conclusion, in both cases, we have reached a contradiction.
  This ensures that $\nu = +\infty$. In particulier, $W$ contains an
  element $p_0$ of the form $(u_0, 0)$.
  Let $p_{1} = (u_{1},v_{1}) \in G$ be given by
  Lemma~\ref{lem:cover-lem-covering-csq}.
  If $v_{1}=0$, this pair would be a reducer of $(u_0,0) \in W$, which is a contradiction.
  So $v_{1} \neq 0$.
  Set $t = \frac{\LT(u)}{\LT(u_1)}$. Let:
  $$p_* = (u_*,v_*) = (u_0,0) - t(u_{1},v_{1}) = (u_0 - tu_1, -v_1)$$
  Then $\LM(u_*) < \LM(u_0)$ and $\LM(v_*) = t \LM(v_{1})$.
  From $v_{1} \neq 0$, we deduce $p_* \notin W$. So $p_*$ is 
  top-reducible by $p_{2} = (u_{2},v_{2}) \in G$, meaning that there
  exists a term $t_1$ such that
  $t_{1} \LM(v_{2}) = \LM(v_*) = t \LM(v_{1})$ and $t_{1}\LM(u_{2}) \leq
  \LM(u_*) < t \LM(u_{1})$.
  So $p_2$ is a regular top-reducer of $t p_{1}$, which contradicts
  Lemma~\ref{lem:cover-lem-covering-csq}.

  Finally, we conclude that $W$ is empty.
  By construction, $G$ is an SGB of $M$.

\section{Valuation over position over term}
\label{sec:VaPoTe}

In this section, we design a variant of the PoTe algorithm in which, 
roughly speaking, signatures are first ordered by increasing valuations.

\subsection{The VaPoTe algorithm}
\label{sec:vapote-algorithm}

The VaPoTe\footnote{VaPoTe means
``\textbf{Va}luation  over \textbf{Po}sition over \textbf{Te}rm''}
algorithm is Algorithm~\ref{algo:VaPoTe} (page~\pageref{algo:VaPoTe}).
It is striking to observe that it looks formally very similar
to the PoTe Algorithm (Algorithm~\ref{algo:PoTe}) as they only differ 
on lines~\ref{line:blank}--\ref{line:popf} and, more importantly, on 
lines~\ref{line:diff1}--\ref{line:diff2}. However, these slight 
changes may have significant consequences on the order in which the 
inputs are processed, implying possibly important differences in the 
behaviour of the algorithms.

The VaPoTe algorithm has a couple of interesting features. First, if 
we stop the execution of the algorithm at the moment when we first reach 
a series $f$ of valuation greater than $N$ on line 4, the value of $\GB$ 
is a GB of the image of $I = \left< f_1, \ldots, f_m \right>$ in 
$\KzX/\pi^N\KzX$. In other words, the VaPoTe algorithm can be used to 
compute GB of ideals of $\KzX/(\pi^N) \simeq \Kz[\X]/(\pi^N)$ (for 
our modified order) as well.

Secondly, Algorithm~\ref{algo:VaPoTe} remains correct if the reduction 
on line~\ref{line:regred} is interrupted as soon as the valuation 
rises. The property allows for delaying some reductions, which might be 
expensive at one time but cheaper later (because more reductors are
available).
It also has a theoretical interest because the reduction process may 
\emph{a priori} hang forever (if we are working at infinite precision);
interrupting it prematurely removes this defect and leads to more
satisfying termination results.

\subsection{Proof of correctness and termination}

We introduce some notation.
For a series $f \in \KzX$, we write $\nu(f) = \pi^{-\val(f)} f$
(which has valuation $0$ by construction) and define $\rho(f)$ as 
the image of $\nu(f)$ in $\KzX/\pi\KzX \simeq \kX$. More generally
if $A$ is a subset of $\KzX$, we define $\nu(A)$ and $\rho(A)$
accordingly.

We consider $f_1, \ldots, f_m \in \KzX$ and
write $I$ for the ideal of $\KzX$ they generate.
For an integer $N$, we set $I_N = I \cap (\pi^N \KzX)$.
Clearly $I_{N+1} \subset I_N$ for all $N$. Let $\bar I_N$ be the 
image of $\pi^{-N} I_N$ in $\kX$; we have a canonical isomorphism
$\bar I_N \simeq I_N/I_{N+1}$. 
Besides, the morphism $I_N \to I_{N+1}$, $f \mapsto \pi f$ induces an 
inclusion $\bar I_N \hookrightarrow \bar I_{N+1}$. Hence, the $\bar I_N$'s form
a nondecreasing sequence of ideals of $\kX$.

We define $Q_\all$ as the set of all series that are popped from 
$Q$ on line~\ref{line:diff1} during the execution of 
Algorithm~\ref{algo:VaPoTe}. Since the algorithm terminates when $Q$ is 
empty, $Q_\all$ is also the set of all series that have been in $Q$ at 
some moment. For an integer $N$, we define
\begin{align*}
Q_{> N} & = \big\{ \, f \in Q_\all \text{ s.t. } \val(f) > N \, \big\}.
\end{align*}
and similarly $Q_{N}$ and $Q_{\leq N}$.
Let also $\tau_N$ be the first time we enter in the while loop on 
line~\ref{line:while} with $Q \subset \pi^N \KzX$. If this event 
never occurs, $\tau_N$ is defined as the time the algorithm exits
the main while loop. We finally let $\GB_N$ be the value of the
variable $\GB$ at the checkpoint $\tau_N$.

\begin{lem}
\label{lem:checkpoints}
Between the checkpoints $\tau_N$ and $\tau_{N+1}$:

\noindent
(1)~the elements popped from $Q$ are exactly those of $Q_N$, and

\noindent
(2)~the ``reduction modulo $\pi^{N+1}$'' of the VaPoTe algorithm 
behaves like the G2V algorithm, with 
input polynomials $\rho(Q_N)$ and initial value of $\GB$ set to
$\rho(\GB_N)$.

\end{lem}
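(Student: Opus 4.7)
For part (1) I would bookkeep on the order in which elements leave $Q$. At $\tau_N$ every element of $Q$ has valuation $\geq N$ by definition, and the smallest-valuation rule on line~\ref{line:popf} forces the valuation-$N$ elements to be popped first. Any element added to $Q$ after $\tau_N$ (on line~\ref{line:diff2}) satisfies $\val(v_0) > \val(f) \geq N$, hence has valuation $\geq N+1$; so no new valuation-$N$ element is ever created in the interval. Combining these, every $f \in Q_N$ must have entered $Q$ before $\tau_N$ and is popped before $\tau_{N+1}$; conversely, anything popped in this interval of valuation $>N$ would contradict $\tau_{N+1}$ being the \emph{first} checkpoint at which $Q \subset \pi^{N+1}\KzX$. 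The set popped in this interval is therefore exactly $Q_N$.

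For part (2) the plan is to build a step-by-step correspondence between the VaPoTe execution on the interval $(\tau_N,\tau_{N+1})$ and a full run of the G2V algorithm on the inputs $\rho(Q_N)$ with initial basis $\rho(\GB_N)$. The key structural fact to establish first is that, during the processing of any $f \in Q_N$, every pair $(u,v)$ that enters $G$ or $B$, and every pair handed to \regularreduce, satisfies $\val(u)=0$ and (when $v\neq 0$) $\val(v)=N$. I would prove this by induction, using: the seed pair $(1,f)$ with $\val(f)=N$; the invariant that every $g \in \GB$ has $\val(g) \leq N$ throughout the interval (the elements inherited from $\GB_N$ have $\val < N$ because they were added during earlier outer iterations processing some $f'$ with $\val(f') < N$, and elements added during the current interval have $\val = N$); and the formula for $\lcm$ in $\TTzX \simeq \NN \times \NN^n$, which takes the maximum on the valuation component, ensuring that the J-pair multipliers $t_i$ preserve the valuation profile.

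With this in hand, the projection $\rho$ commutes with every operation the algorithm performs on the relevant pairs: leading-monomial comparisons in $\TTzX$ reduce to $\leq_\omega$-comparisons in $\NN^n$ once the valuation components are fixed; the divisibility, cover and J-pair tests translate verbatim; and regular reduction in $\KzX$ projects to regular reduction in $\kX$. The branching on line~\ref{line:diff1} mirrors the G2V branching exactly: $\val(v_0) > N$ holds precisely when the reduction of $\rho(v)$ by $\rho(G)$ vanishes modulo $\pi$, which is the G2V reduction-to-zero case that adds $u$ to $S$; the leftover $v_0$ is simply queued in $Q$ for a later outer iteration. The case $\val(v_0)=N$ matches a non-zero G2V reduction and both algorithms add the new pair to $G$ and spawn the same J-pairs.

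The main obstacle I expect is the bookkeeping across the several outer-loop iterations inside one interval $(\tau_N,\tau_{N+1})$: $\GB$ is updated after each $f \in Q_N$, and the invariant ``$\rho$ applied to the VaPoTe state coincides with the current G2V state'' must be kept at every instant, so the induction has to be carried out simultaneously over outer- and inner-loop steps. A secondary subtlety is the treatment of the ``old'' elements $g \in \GB_N$ of valuation strictly less than $N$: in VaPoTe they appear in $G$ only as pairs $(0,g)$ and contribute through their leading monomials, and the identification $\LM(g) = (\val(g), \LM(\rho(g)))$ ensures that the divisibility and cover tests reduce compatibly to their G2V counterparts on the $\rho$-images.
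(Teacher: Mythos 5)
Your proposal is correct and follows essentially the same route as the paper: part (1) via ``elements are popped by increasing valuation and everything added after $\tau_N$ has valuation $\geq N+1$'', and part (2) via the key observation that all $f$ and $v$ manipulated in the interval have valuation exactly $N$ (so that $\rho$ commutes with every test and reduction). The paper's own proof is just a two-sentence version of this; your write-up merely fills in the details it leaves implicit.
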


\begin{proof}
We observe that, after the time $\tau_N$, only elements with
valuation at least $N{+}1$ are added to $Q$. The first statement
then follows from the fact that the elements of $Q$ have been 
popped by increasing valuation.
The second statement is a consequence of~(1) together with the fact 
that all $f$
and $v$ manipulated by Algorithm~\ref{algo:VaPoTe} between the
times $\tau_N$ and $\tau_{N+1}$ have valuation $N$.
\end{proof}

Since the G2V algorithm terminates for polynomials over a field, 
Lemma~\ref{lem:checkpoints} ensures that each checkpoint $\tau_N$ is 
reached in finite time if the call to \regularreduce does not 
hang forever. This latter property holds when we are working at finite 
precision and is also guaranteed if we interrupt the reduction as soon 
as the valuation raises.

We are now going to relate the ideals $\bar I_N$ with the sets $Q_N$, 
$Q_{\leq N}$ and $Q_{> N}$. For this, we introduce the syzygies between 
the elements of $\rho(Q_{\leq N})$. More precisely, we set:
$$S_N = \Big\{ \,\,
(a_f)_{f \in Q_{\leq N}} \, \text{ s.t. }
\sum_{f \in Q_{\leq N}} \hspace{-1ex} a_f \nu(f) \equiv 0 \pmod{\pi} 
\,\, \Big\}.$$
and let $\bar S_N$ be the image of $S_N$ under the projection
$\KzX \,{\to}\, \kX$; in other words, $\bar S_N$ is the module of syzygies
of the set $\rho(Q_{\leq N})$, \emph{i.e.} $\bar S_n = Syz(\rho(Q_{\leq N}))$
with the notation of Definition~\ref{def:syzygy}. 
We also define a linear mapping
$\varphi_{N}: (\KzX)^{Q_{\leq N}} \to  \KzX$ by
\begin{displaymath}
  \varphi_{N} : (a_f)_{f \in Q_{\leq N}} \mapsto 
   \sum_{f \in Q_{\leq N}} \hspace{-1ex} a_f \nu(f).
\end{displaymath}
By definition, $\varphi_N$ takes its values in the ideal generated by
$\nu(Q_{\leq N})$ and $\varphi_N(S_N) \subset \pi \KzX$.

\begin{prop}
\label{prop:VaPoTe}
For any integer $N$, the following holds:

\smallskip

(a)~The family $\rho(\GB_{N+1})$ is a GB of $\bar I_N$.

\smallskip

(b)~$\varphi_N(S_N) \subset 
\big< \pi {\cdot} \nu(Q_{\leq N}), \, \pi^{-N} Q_{> N}\big>$.

\smallskip

(c)~$I_{N+1} = \big< \pi^{N+1} {\cdot} \nu(Q_{\leq N+1}), \, Q_{> N+1}\big>$.

\smallskip

(d)~$\bar I_{N+1} = \big< \rho(Q_{\leq N+1}) \big>$.

\end{prop}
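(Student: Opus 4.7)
I would proceed by induction on $N$, establishing the four statements simultaneously. The base case is any sufficiently small $N$ for which $Q_{\leq N}$ is empty, in which case all four statements are either vacuous or trivial. In the inductive step, I would deduce the statements at level $N$ from those at level $N{-}1$ in the order (a), then (b), then (c), then (d), with Lemma~\ref{lem:checkpoints} acting as the main bridge between the VaPoTe run and the classical G2V theory.

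For (a), a direct appeal to Lemma~\ref{lem:checkpoints} and the inductive hypothesis suffices: between $\tau_N$ and $\tau_{N+1}$ the algorithm behaves like G2V on $\rho(Q_N)$ with initial GB $\rho(\GB_N)$, so $\rho(\GB_{N+1})$ is a GB of $\langle \rho(\GB_N), \rho(Q_N) \rangle$; by (a) and (d) at $N{-}1$ this ideal coincides with $\bar I_{N-1} + \langle \rho(Q_N) \rangle = \langle \rho(Q_{\leq N})\rangle = \bar I_N$.

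For (c), the inclusion $\supseteq$ is immediate since all generators lie in $I$ and have valuation $\geq N+1$. For the other inclusion, I would take $h \in I_{N+1} \subset I_N$ and invoke (c) at $N{-}1$ to write $h = \sum_{g \in Q_{\leq N}} a_g\, \pi^N \nu(g) + \sum_{g' \in Q_{>N}} b_{g'} g'$; since $\pi^{-N}h$ and each $\pi^{-N}g'$ lie in $\pi\KzX$, the residual $\sum_g a_g\nu(g)$ is in $\pi\KzX$, so $(a_g)\in S_N$. Invoking (b) at $N$ rewrites $\sum a_g \nu(g) = \sum c_g\, \pi\nu(g) + \sum d_{g'}\pi^{-N} g'$; multiplying by $\pi^N$ and absorbing the $Q_{N+1}$-part of $Q_{>N}$ into $\pi^{N+1}\nu(Q_{\leq N+1})$ yields (c). Statement (d) follows by reducing (c) modulo $\pi$ via the map $\pi^{-(N+1)}I_{N+1}\to \kX$: the generators $\pi^{N+1}\nu(g)$ descend to $\rho(g)$ while the $Q_{>N+1}$-generators descend to $0$, and the reverse inclusion is immediate from the definition of $\bar I_{N+1}$.

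The main obstacle is (b). For $(a_f) \in S_N$, the tuple $(\rho(a_f))$ is an element of $\bar S_N = Syz(\rho(Q_{\leq N}))$. My plan is to describe $\bar S_N$ incrementally via iterated applications of Lemma~\ref{lem:syzygy}, processing the elements of $Q_{\leq N}$ in VaPoTe's order and reducing to a generating set for the colon ideals that appear at each stage. Proposition~\ref{prop:SGBtoGB} combined with Lemma~\ref{lem:checkpoints} identifies each such colon ideal with the signatures of VaPoTe's zero reductions, namely the $u$'s for which $\val(v_0) > \val(f)$; and each such zero reduction is witnessed in $\KzX$ by an identity of the form $u\, f - \sum b_g g = v_0$ with $v_0 \in Q_{>\val(f)}$, which is the concrete certificate placing the corresponding generator syzygy into $\langle \pi\nu(Q_{\leq N}),\pi^{-N}Q_{>N}\rangle$. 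The delicate point is the valuation bookkeeping: a $v_0$ of valuation $\leq N$ must be absorbed into the $\pi\nu(Q_{\leq N})$ summand (using that $\pi\nu(v_0) = \pi^{1-\val(v_0)}v_0$ involves a nonnegative power of $\pi$), while those of valuation $>N$ contribute through $\pi^{-N}Q_{>N}$; a separate lifting argument is also needed to reconcile the $\kX$-decomposition of $(\rho(a_f))$ against this generating set with a $\KzX$-realization.
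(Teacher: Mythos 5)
Your proposal follows essentially the same route as the paper's proof: the same simultaneous induction on $N$ establishing (a)--(d) in that order, the same use of Lemma~\ref{lem:checkpoints} for (a), the derivation of (c) from (b) via the residual syzygy and of (d) by reduction modulo $\pi$, and for (b) the same identification of the generators of $\bar S_N$ with the zero-reduction certificates $uf - \sum b_g g = v_0$ through Proposition~\ref{prop:SGBtoGB} and Lemma~\ref{lem:syzygy} (the paper merely packages your ``iterated'' description as a single inductive step from $\bar S_{N-1}$ to $\bar S_N$). The valuation bookkeeping you flag as delicate is exactly what the paper handles via the observation $Q_{>N-1} = \pi^N\nu(Q_N)\cup Q_{>N}$, which matches your absorption argument.
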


\begin{proof}
When $N < 0$, we have $S_N = 0$ and $I_{N+1} = I$,
so that the proposition is obvious.
We now consider a nonnegative integer $N$ and assume that the 
proposition holds for $N{-}1$.
By the induction hypothesis, we know that $\rho(\GB_N)$ is a GB of 
$\bar I_{N-1}$.
It then follows from Lemma~\ref{lem:checkpoints} that 
$\rho(\GB_{N+1})$ is a GB of the ideal generated by $\bar I_{N-1}$ and 
$\rho(Q_N)$, which is equal to $\bar I_N$ by the induction hypothesis. 
The assertion~(a) is then proved.

Between the checkpoints $\tau_N$ and $\tau_{N+1}$,
each signature $u$ added to $S$ on line~\ref{line:diff2} 
corresponds to a family 
$(a_f)_{f \in Q_{\leq N}}$ for which the sum $\sum_f a_f f$ equals
the element $v_0$ added to $Q$ on the same line. Rescaling the $a_f$'s, 
we cook up an element $z \in S_N$ with the property that $\varphi_N(z) = 
\pi^{-N} v_0$.
Let $Z \subset S_N$ be the set of those elements.
From Proposition~\ref{prop:SGBtoGB} and Lemma~\ref{lem:syzygy}, we 
derive that $\bar S_N$ is generated by $\bar S_{N-1}$ (viewed as a 
submodule of $\bar S_N$ by filling new coordinates with zeroes)
and $Z$. Thus:
\begin{align*}
\varphi_N(S_N) 
 & = \varphi_{N-1}(S_{N-1}) + 
 \big< \varphi_N(Z), \, \pi{\cdot}\nu(Q_{\leq N}) \big> \\
 & \subset \varphi_{N-1}(S_{N-1}) + \big< \pi^{-N} Q_{> N}, \, 
 \pi{\cdot}\nu(Q_{\leq N}) \big>.
\end{align*}
The assertion~(b) now follows from the induction hypothesis, once
we have observed that $Q_{> N-1} = \pi^N \nu(Q_N) \cup Q_{> N}$.

Let us now prove~(c). Let $h \in I_{N+1}$. Then $h \in I_N$ and we
can use the induction hypothesis to write
$$h = \pi^N \hspace{-1ex} 
      \sum_{f \in Q_{\leq N}} \hspace{-1ex} a_f \nu(f) \hspace{1ex}
    + \sum_{g \in Q_{> N}} \hspace{-1ex} b_g g$$
for some $a_f, b_g \in \KzX$. Reducing modulo 
$\pi^{N+1}$, we find that the family $(a_f)_{f \in Q_{\leq N}}$ 
belongs to $S_N$. From~(b), we deduce that
$\sum_{f \in Q_{\leq N}} a_f \nu(f) \in
\big< \pi {\cdot} \nu(Q_{\leq N}), \, \pi^{-N} Q_{> N}\big>$.
We then conclude
by noticing that $Q_{>N} = \pi^{N+1} \nu(Q_{N+1}) \cup Q_{> N+1}$.

Finally, (d)~follows from~(c) by dividing by $\pi^{N+1}$ and
reducing modulo $\pi$.
\end{proof}

\noindent 
\textit{Termination.}
Since $\kX$ is noetherian, the sequence of ideals $(\bar I_N)$ is 
eventually constant. This implies that $\GB$ cannot grow indefinitely;
in other words, the final value of $\GB$ is reached in finite time.
However, the reader should be careful that this does not mean that 
Algorithm~\ref{algo:VaPoTe} terminates.
Indeed, once the final value of $\GB$ has been computed, one still 
has to check that the remaining series in $Q$ reduce to zero; this 
is achieved by performing divisions and can hang forever if we are 
working at infinite precision.
Nevertheless, this misfeature seems very difficult to avoid since, 
when working at infinite precision, the input series contain themselves 
an infinite number of coefficients and any modification on one of them 
could have a strong influence on the final result.

\medskip

\noindent
\textit{Correctness.}
Let $G$ be the output of Algorithm~\ref{algo:VaPoTe}, that is the
limit of the ultimately constant sequence $(\GB_N)$. For a positive
integer $N$, we define
$G_{\leq N}$ as the set of $f \in G$ with $\val(f) \leq N$.
Since only elements of valuation at least $N{+}1$ are added to 
$\GB$ after the checkpoint $\tau_{N+1}$, we deduce that $G_{\leq N} = 
\GB_{N+1}$.
Hence, by Proposition~\ref{prop:VaPoTe}, $\rho(G_{\leq N})$ is a GB 
of $\bar I_N$ for all $N \geq 0$. We are going to show that this
sole property implies that $G$ is indeed a GB of $I$.
For this, we consider $f \in I$. We write $N = \val(f)$,
so that $\rho(f)$ is the image in $\kX$ of $\pi^{-N} f$. Moreover,
we know that $\LM(\rho(f))$ is divisible by $\LM(\rho(g))$ for 
some $g \in G_{\leq N}$, \emph{i.e.} there exists $\i \in \NN^n$
such that $\LM(\rho(f)) = \X^\i {\cdot}\LM(\rho(g))$. This readily
implies that 
$\LM(f) = \pi^{N-\val(g)} \cdot \X^\i \cdot \LM(g)$, 
showing that $\LM(g)$ divides $\LM(f)$ in $\TTzX$ given that 
$\val(g) \leq N$. We have then proved that the leading monomial of 
any element of~$I$ is divisible by some $\LM(g)$ with $g \in G$,
\emph{i.e.} that $G$ is a GB of~$I$.

\section{Implementation}
\label{sec:implementation}

We have implemented both the PoTe and VaPoTe algorithms 
in \sage\footnote{\url{https://trac.sagemath.org/ticket/28777}}.
Our implementation includes the following optimization: at the end of the loop
(\emph{i.e.} after line~\ref{line:updateGB}), we minimize and reduce the current
GB in construction. This operation is allowed since all signatures are discarded
after each iteration of the loop.
Similarly, we reduce each new series $f$ popped from $Q$ on line~\ref{line:popf} 
before proceeding it.
These ideas were explored in the algorithm \hbox{F5-C}~\cite{EP10} and, as mentioned before, 
were one of the main motivations for adopting an incremental point of view.

Our implementation is also able to compute GB of ideals in $\KX$. 
For this, we simply use a reduction (for no extra cost) to the case 
of $\KzX$ (see~\cite[Proposition 2.23]{CVV}). We also normalize
the signatures in $S$ to be monic after each iteration of the main loop; 
in the PoTe algorithm,
this renormalization gives a stronger cover criterion and thus improves 
the performances.

As mentioned in Section~\ref{sec:vapote-algorithm}, Algorithm~\ref{algo:VaPoTe} remains
correct if the reductions are interrupted as soon as the valuation rises.
This can be done in the reduction step before processing the next $f$, before adding elements to the SGB, as well as in
the inter-reduction step.
Delaying reductions could be interesting, for instance,
if the input ideal is saturated: indeed, in this case, the algorithm never considers 
elements with positive valuation and delayed reductions do not need to be done
afterwards.
On the other hand, performing more reductions earlier leads to shorter reducers and
potentially faster reductions later.
In practice, in our current implementation, we have observed all possible scenarios:
interrupting the reductions can make the computation faster, slower,
or not make any significant difference.

\subsection{Some timings}

Numerous experimentations on various random inputs show that the VaPoTe algorithm
performs slightly better than the PoTe algorithm on average. Besides, both
PoTe and VaPoTe algorithms usually perform much better than Buchberger 
algorithm, although we observed important variations depending 
on the input system.

\begin{table}[t]
  \centering

\caption{Timings for the computation of GBs related to the torsion points on the Tate curve (all times in seconds)}
  \label{tab:benchs}

\vspace{-1mm}

  \begin{tabular}{l@{\hspace{1ex}}l@{\hspace{1ex}}l@{\hspace{5ex}}SSS}
    \multicolumn{3}{l}{Parameters} & {Buchberger} & {PoTe} & {VaPoTe} \\
    \midrule

    $p=5$, & $\ell=5$, & $\text{prec}=12$ &  87.9 & 72.2 & 19.2 \\
   $p=11$, & $\ell=5$, & $\text{prec}=12$ & 321   & 30.5 & 28.9 \\
$p=57637$, & $\ell=5$, & $\text{prec}=12$ &  83.2 & 13.3 & 13.3 \\
    $p=7$, & $\ell=7$, &  $\text{prec}=9$ &  62.3 & 45.3 & 27.7 \\
   $p=11$, & $\ell=7$, &  $\text{prec}=9$ & 168   & 36.0 & 28.5 \\
  \end{tabular}

\vspace{-1mm}

\end{table}

As mentioned in the introduction, Tate algebras are the building blocks
of $p$-adic geometry. 
One can then cook up interesting systems associated to meaningful 
geometrical situations. As a basic example, let us look at torsion
points on elliptic curves.

We recall briefly that (a certain class of) elliptic curves over 
$K = \Q_p$ are in one-to-one correspondence with a parameter $q$
lying in the open unit disc~\cite{TateCurve}. 
The parametric equation of these curves is
$y^2 + xy = x^3 + a_4(q)\: x + a_6(q)$ with:
$$a_4(q) = 5 \sum_{n=0}^\infty n^3 \frac{q^n}{1 - q^n},
\quad
a_6(q) = \sum_{n=0}^\infty \frac{7n^5 + 5n^3}{12} \frac{q^n}{1 - q^n}.$$
In order to fit with the framework of this article, we only consider 
parameters $q$ in the closed unit disc of radius $|p|$ and
perform the change of variables $q = p t$.
Given an auxiliary prime number $\ell$, we consider the $\ell$-th
division polynomial $\Phi_\ell(x,t) \in K\{t\}^\circ[x]$ associated to
the Weierstrass form of the above equation. By definition, its roots are 
the abscissas of $\ell$-torsion points of the Tate curve.
We now fix $p$ and $\ell$ and consider the system in $3$ variables
$\Phi_\ell(x, t_1) = \Phi_\ell(x, t_2) = 0$.
Its solutions parametrize the pairs of elliptic curves sharing a common 
$\ell$-torsion point. Computing a GB of it then provides information
about torsion points on $p$-adic elliptic curves.
Related (but more sophisticated) computations are likely to appear
in the study of the arithmetics of $p$-adic modular forms~\cite{Go88} 
or the development of $p$-adic analogues and refinements of Tate's isogeny 
Theorem~\cite{Ta66}.

Table~\ref{tab:benchs} shows the timings obtained for computing a GB of the above
systems for different values of $p$, $\ell$ and different precisions.
We clearly see on these examples that both PoTe and VaPoTe outperform 
the Buchberger algorithm.

\subsection{Towards further improvements}

\noindent
\textit{Faster reductions.}
Observing how our algorithms behave, one immediately notices that reductions
are very slow. It is not that surprising since our 
reduction algorithm is currently very naive. For this reason, we believe
that several structural improvements are quite possible.
An idea in this direction would be to store a well-chosen representative 
sample of reductions and reuse them later on. Typically, we could cache 
the reductions of all terms of the form $x_1^{2^{e_1}} \cdots 
x_n^{2^{e_n}}$ (with respect to the current GB in construction) and
use them to emulate a fast exponentation algorithm in the quotient
ring $\KzX/\left<GB\right>$.

Another attractive idea for accelerating reduction is to incorporate
Mora's reduction algorithm~\cite{Mora, RenEtc} in our framework.
Let us recall that Mora's algorithm is a special method for reducing
terms with respect to local or mixed orders (\emph{i.e.} orders for
which there exist terms $t < 1$), avoiding infinite loops in the
reduction process.
In our framework, infinite loops of reductions cannot arise since 
the computations are truncated at a given precision. Nevertheless,
we believe that Mora's algorithm can still be used to short-circuit 
some reductions.

The situation for Tate terms is actually significantly simpler than 
that of general local orders.
Indeed, Mora's reduction algorithm roughly amounts to add $\pi r$ to our list
of reductors each time we encounter a remainder $r$ (including $f$ itself)
in the reduction process. We believe that this optimization, if it is
carefully implemented, could already have some impact on the performances.
Besides, observing that the equality $f = r + \pi q f$ also reads 
$f = (1 - \pi q)^{-1} r$, we realize that Mora reductions of a Tate
series are somehow related to its Weierstrass decomposition. Moreover, 
at least in the univariate case, it is well known that Weierstrass 
decompositions can be efficiently computed using a well-suited Newton 
iteration. It could be interesting to figure out whether this strategy 
extends to multivariate series and, more generally, to the computation 
of arbitrary Mora reductions.

\medskip

\noindent
\textit{Using overconvergence properties.}
In a different direction, we would like to underline that the orderings
we are working with are by design block orders (comparing first the 
valuation). However, in the classical setting, we all know that graded 
orders often lead to much more efficient algorithms. 
Unfortunately, in the setting of this article, the very first definition 
of a Tate series already forces us to give the priority to the valuation 
in the comparison of terms; otherwise, the leading term would not be 
defined in general.

Nonetheless, we emphasize that even if graded orders do not exist over
$\KX$, they do exist over some subrings. Precisely, recall that, 
given a tuple $\r = (r_1, \ldots, r_n)$, we have defined\footnote{We
refer to \cite{CVV} for more details}:
$$\KX[\r] := \Big\{ \sum_{\i \in \NN^{n}} a_{\i}\X^{\i}
\text{ s.t. }
a_{\i}\in K \text{ and }
\val(a_\i) - \r{\cdot}\i\xrightarrow[|\i| \rightarrow +\infty]{} +\infty
\Big\}$$
where $\r{\cdot}\i$ denotes the scalar product of the vectors $\r$
and $\i$. When the $r_i$'s are all nonnegative, $\KX[\r]$ embeds
naturally into $\KX$; precisely, elements in $\KX[\r]$ are those
series that overconverges over the polydisk of polyradius 
$(|\pi|^{-r_1}, \ldots, |\pi|^{-r_n})$.
Moreover, the algebra $\KX[\r]$ is equipped with the valuation $\val_\r$ defined 
by:
$$\val_\r\Big(\sum_{\i \in \NN^{n}} a_{\i}\X^{\i}\Big) = 
\min_{\i \in \NN^{n}} \val(a_\i) - \r{\cdot}\i.$$
This valuation defines a new term ordering $\leq_{\r}$. 
We observe that, from the point of view of $\KX$, it really looks like a 
graded order: the quantity $\val_\r(f)$ plays the role of (the opposite 
of) a ``total degree'' which mixes the contribution of the valuation and 
that of the classical degree.

In light of the above remarks, we formulate the following question.
Suppose that we are given an ideal $I \subset \KzX$ (say, of
dimension $0$) generated by some series $f_1, \ldots, f_m$. If we have
the promise that the $f_i$'s all overconverge, \emph{i.e.} all lie in 
$\KX[\r]$ for a given $\r$, can we imagine an algorithm that computes a 
GB of $I$ taking advantage of the term ordering $\leq_{\r}$? 
As an extreme case, if we have the promise that all the $f_i$'s are
polynomials (that is $r_i = +\infty$ for all $i$), can one use this 
assumption to accelerate the computation of a GB of $I$?

\bibliographystyle{plain}

\end{document}
